\newtheorem{theorem}{Theorem}
\newtheorem{definition}[theorem]{Definition}
\newtheorem{lemma}[theorem]{Lemma}
\newtheorem{proposition}[theorem]{Proposition}
\newtheorem{remark}[theorem]{Remark}
\newenvironment{proof}[1][Proof]{\noindent\textbf{#1.} }{\ $\blacksquare$}
\begin{document}

\title{Construction of Martingale Measure in the Hazard Process Model of Credit Risk}
\author{Marek Capi\'{n}ski\thanks{\texttt{capinski@agh.edu.pl}; Faculty of
Mathematics, AGH--University of Science and Technology, Al.~Mickiewicza~30,
30--059 Krak\'ow, Poland.}\ \ and Tomasz
Zastawniak\thanks{\texttt{tomasz.zastawniak@york.ac.uk}; Department of
Mathematics, University of York, Heslington, York YO10~5DD, United Kingdom.}}
\date{\cleanlookdateon\today}
\maketitle

\begin{abstract}
In credit risk literature, the existence of an equivalent martingale measure
is stipulated as one of the main assumptions in the hazard process model. Here
we show by construction the existence of a
measure that turns the discounted stock and defaultable bond prices into
martingales by identifying a no-arbitrage condition, in as weak a sense as
possible, which facilitates such a construction.

\end{abstract}

\section{Introduction}

No arbitrage is the principal condition in mathematical finance, a basis for
pricing derivative securities. In the literature on the hazard process model
of credit risk (for example, \cite{BieRut2002}, \cite{BieJeaRut2009} and
references therein) the existence of an equivalent martingale measure is
assumed, the lack of arbitrage following as an immediate consequence. Here we
work in the opposite direction.

A construction of a martingale measure in the relatively straightforward case
of the hazard function model of credit risk in the absence of simple arbitrage
was accomplished in~\cite{CapZas2014}. In the much more general setting of the
hazard process model, the construction of a martingale measure from a suitably
weak no-arbitrage condition turns out to be far from trivial, and constitutes
the main result of the present paper. This no-arbitrage condition, referred to
as the no-quasi-simple-arbitrage principle later in the paper, implies that
the pre-default value of the defaultable bond is a strict submartingale with
values between~$0$ and~$1$ under the Black--Scholes measure. It makes it
possible to apply the Doob--Meyer type multiplicative decomposition for
positive submartingales, leading to a new definition of the survival process,
hence of the hazard process, as the unique up to indistinguishability strictly
positive previsible (with respect to the Black--Scholes filtration) process
that features in the multiplicative decomposition. The martingale measure in
the hazard process model is then constructed with the aid of the survival (or
hazard) process by a method resembling the classical construction of Wiener
measure on path space.

\section{Market model\label{Sect:f745hhandta}}

We consider three assets, a non-defaultable bond $B(t,T)=e^{-r\left(
T-t\right)  }$ for $t\in\lbrack0,T]$ growing at a constant rate $r\geq0$, a
stock with prices $S(t)$ for $t\geq0$, and a defaultable bond with prices
$D(t,T)$ for $t\in\lbrack0,T]$, where $T>0$ is the maturity date for both
bonds. The price processes of the risky assets $S(t)$ and $D(t,T)$ are defined
on a probability space $\left(  \Omega,\Sigma,P\right)  $, where $P$ is the
physical probability. Throughout this paper, equalities and inequalities
between random variables on $\left(  \Omega,\Sigma,P\right)  $ as well as
pathwise properties of stochastic processes such as, for example, continuity
of paths will be understood to hold $P$-a.s.

The stock price process $S(t)$ is assumed to follow the Black--Scholes model
with driving Brownian motion~$W(t)$. We write $\left(  \mathcal{F}_{t}\right)
_{t\geq0}$ for the augmented filtration generated by the Brownian
motion.

We also take a random variable $\tau>0$ on $\left(  \Omega,\Sigma,P\right)  $
to play the role of the time of default. Let $\left(  \mathcal{I}_{t}\right)
_{t\geq0}$ be the filtration generated by the default indicator process
$I(t)=\mathbf{1}_{\left\{  \tau\leq t\right\}  }$ and let $\left(
\mathcal{G}_{t}\right)  _{t\geq0}$ be the enlarged filtration,%
\[
\mathcal{G}_{t}:=\sigma(\mathcal{F}_{t}\cup\mathcal{I}_{t})
\]
for each $t\geq0$.

The defaultable bond price process is assumed to be of the form%
\begin{equation}
D(t,T)=c(t)\mathbf{1}_{\left\{  t<\tau\right\}  }, \label{ahewtfd54bd7}%
\end{equation}
where $c(t)$, $t\in\lbrack0,T]$, called the \emph{pre-default value} of
$D(t,T)$, is an $(\mathcal{F}_{t})_{t\in\lbrack0,T]}$-adapted process with
continuous paths
such that $c(t)\in(0,1)$ for $t\in\lbrack0,T)$ and $c(T)=1$. In particular,
the payoff of this bond is $D(T,T)=\mathbf{1}_{\left\{  T<\tau\right\}  }$ at
time~$T$, that is, the defaultable bond has zero recovery.

\begin{remark}
\upshape In Appendix~\ref{Sect:str5eydvshs} we show that expression
(\ref{ahewtfd54bd7}) follows from certain weaker assumptions by means of a
no-arbitrage argument within a family of simple strategies in the $BD$ section
of the market only. This is similar to the hazard function model consisting of
two bonds $B$ and $D$ only, considered in~\cite{CapZas2014}. However, while
$c(t)$ is a deterministic strictly increasing function in the toy model
in~\cite{CapZas2014}, here it is an $(\mathcal{F}_{t})_{t\in\lbrack0,T]}%
$-adapted process, which turns out to be a strict supermartingale under a
suitable no-arbitrage condition as shown in Section~\ref{Sect.oape9md6bf54}.
\end{remark}

Additionally, we assume that
\begin{equation}
P\left(  s<\tau\leq t|\mathcal{F}_{T}\right)  >0 \label{d7465edgd}%
\end{equation}
for any $s,t\geq0$ such that $s<t$. In other words, for every $A\in
\mathcal{F}_{T}$ of positive measure~$P$, the event $A\cap\left\{  s<\tau\leq
t\right\}  $ is also of positive measure~$P$. This condition means that there
are no gaps in the set of values of~$\tau$, i.e.\ default can happen at any
time, no matter what the stock price process is doing.\footnote{The following
slightly weaker condition is in fact sufficien
t: $P\left(  T<\tau
|\mathcal{F}_{T}\right)  >0$ and $P\left(  s<\tau\leq t|\mathcal{F}%
_{T}\right)  >0$ for all $s,t\in\lbrack0,T]$ such that $s<t$.}

Since the stock $S$ follows the Black--Scholes model, there is a unique
probability measure~$Q_{BS}$ equivalent to~$P$ such that the discounted stock
price process $e^{-rt}S(t)$ is an $\left(  \mathcal{F}_{t}\right)  _{t\geq0}%
$-martingale under~$Q_{BS}$. Since all the processes will need to be
considered up to time~$T$ only, it will suffice if $Q_{BS}$ is understood as a
measure defined on the $\sigma$-algebra~$\mathcal{F}_{T}$.

\section{No quasi-simple arbitrage\label{Sect.oape9md6bf54}}

In the $BS$ segment of the market a self-financing strategy with rebalancing
in continuous time can be defined in the usual manner in terms of the
stochastic integral with respect to the Black--Scholes stock price
process~$S(t)$. Such a strategy is said to be \emph{admissible} whenever its
discounted value process is an $(\mathcal{F}_{t})_{t\ge0}$-martingale
under~$Q_{BS}$.

On the other hand, the above properties of the defaultable bond are \emph{a
priori} not enough to consider a stochastic integral with respect to the
process $D(t,T)$. Hence, for the time being at least, we consider a class of
self-financing strategies such that continuous rebalancing is allowed within
the $BS$ segment of the market, while the position in~$D$ can only be
rebalanced at a finite set of times. We will show that lack of arbitrage
opportunities within the class of such strategies is equivalent to the
existence of a martingale measure.

A strategy of this kind can be constructed as follows. Take $0=s_{0}%
<s_{1}<\cdots<s_{N}=T$ to be the defaultable bond rebalancing times for the
strategy, and let $y_{n}$ be $\mathcal{F}_{s_{n-1}}$-measurable random
variables representing the positions in~$D$ within the time intervals
from~$s_{n-1}$ to~$s_{n}$ for $n=1,\ldots,N$. At time~$0$ we start an
admissible self-financing Black--Scholes strategy $x_{1}=(x_{1}^{B},x_{1}%
^{S})$ in the $BS$ segment of the market, and follow this strategy up to
time~$s_{1}$. Then, if $\tau\leq s_{1}$, that is, if default has already
occurred and the defaultable bond $D$ has become worthless, we follow the same
strategy~$x_{1}$ up to time~$T$. But if $s_{1}<\tau$, that is, if no default
has occurred yet, we rebalance the position in~$D$ from $y_{1}$ to~$y_{2}$,
which means a (positive or negative) cash injection into the $BS$ segment of
the market. We add this cash injection to the value of the strategy~$x_{1}$ at
time~$s_{1}$, and start a new self-financing Black--Scholes strategy
$x_{2}=(x_{2}^{B},x_{2}^{S})$ from this new value at time~$s_{1}$. Then, at
time~$s_{2}$ we either continue following the same strategy~$x_{2}$ up to time
$T$ if $\tau\leq s_{2}$, or else we rebalance the position in~$D$ from~$y_{2}$
to~$y_{3}$, adjust the value of the $BS$ segment accordingly, and start a new
self-financing Black--Scholes strategy $x_{3}=(x_{3}^{B},x_{3}^{S})$ from the
adjusted value at time~$s_{2}$. In this manner, we proceed step by step up to
and including time~$s_{N-1}$. This is formalised in the next definition.

\begin{definition}
\label{Def:957dh45d1a}\upshape By a \emph{quasi-simple self-financing
strategy} we understand an $\mathbb{R}^{3}$-valued $(\mathcal{G}_{t}%
)_{t\in[0,T]}$-adapted process $\varphi=\left(  \varphi^{B},\varphi
^{S},\varphi^{D}\right)  $ representing positions in $B,S,D$ such that there
are sequences of times $0=s_{0}<s_{1}<\cdots<s_{N}=T$, $\mathbb{R}^{2}$-valued
$(\mathcal{F}_{t})_{t\in[0,T]}$-adapted processes $x_{1},\ldots,x_{N}$ and
$\mathbb{R}$-valued random variables $y_{1},\ldots,y_{N}$ satisfying the
following conditions:

\begin{enumerate}
\item \label{Def:QSSFS:item1}$x_{n}=\left(  x_{n}^{B},x_{n}^{S}\right)  $ is
an admissible self-financing Black--Scholes strategy in the time interval
$[s_{n-1},T]$ and $y_{n}$ is an $\mathcal{F}_{s_{n-1}}$-measurable random
variable such that%
\begin{equation}
\varphi^{B}(t)=x_{n\wedge\mu}^{B}(t),\quad\varphi^{S}(t)=x_{n\wedge\mu}%
^{S}(t),\quad\varphi^{D}(t)=y_{n\wedge\mu} \label{Eq:nf6sfsfra}%
\end{equation}
for each $n=1,\ldots,N$ and $t\in(s_{n-1},s_{n}]$, where%
\[
\mu:=\max\left\{  m=1,\ldots,N:s_{m-1}<\tau\right\}  ;
\]

\item \label{Def:QSSFS:item4}The \emph{value process}%
\[
V_{\varphi}(t):=\varphi^{B}(t)B(t,T)+\varphi^{S}(t)S(t)+\varphi^{D}(t)D(t,T)
\]
satisfies the following \emph{self-financing condition} for each
$n=0,\ldots,N-1$:%
\[
V_{\varphi}(s_{n})=\lim_{t\searrow s_{n}}V_{\varphi}(t).
\]

\end{enumerate}
\end{definition}

\begin{remark}
\upshape The minimum $n\wedge\mu$ in~(\ref{Eq:nf6sfsfra}) captures the fact
that we switch to a new Black--Scholes strategy~$x_{n}$ and a new defaultable
bond position~$y_{n}$ at time~$s_{n-1}$ only if no default has yet occurred at
that time.
\end{remark}

\begin{definition}
\upshape We say that the \emph{no-quasi-simple-arbitrage} (NQSA)
\emph{principle} holds if there is no quasi-simple self-financing
strategy~$\varphi=\left(  \varphi^{B},\varphi^{S},\varphi^{D}\right)  $ such
that $V_{\varphi}(0)=0$, $V_{\varphi}(T)\geq0$, and $V_{\varphi}(T)>0$ with
positive probability~$P$.
\end{definition}

The following result provides a characterisation of the {NQSA} principle in
terms of the process~$c(t)$.

\begin{theorem}
\label{Thm:d64gsna90ms}Under the assumptions in Section~\ref{Sect:f745hhandta}%
, the following conditions are equivalent:

\begin{enumerate}
\item \label{Thm:NQSA:item1}The {NQSA} principle holds;

\item \label{Thm:NQSA:item2}The process $e^{-rt}c(t)$, $t\in\lbrack0,T]$ is a
strict $\left(  \mathcal{F}_{t}\right)  _{t\in\lbrack0,T]}$-submartingale
under~$Q_{BS}$.
\end{enumerate}
\end{theorem}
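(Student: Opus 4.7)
The proof has two parts that use very different techniques: (1) $\Rightarrow$ (2) is handled by an explicit arbitrage construction, and (2) $\Rightarrow$ (1) by a combined backward/forward induction powered by the strict submartingale inequality.

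For (1) $\Rightarrow$ (2), I would argue by contrapositive. Suppose $\tilde c(t) := e^{-rt}c(t)$ fails to be a strict $Q_{BS}$-submartingale; then one can pick $0 \le s < t < T$ (subdividing if necessary) and $A \in \mathcal{F}_s$ with $Q_{BS}(A) > 0$ on which $E_{Q_{BS}}[\tilde c(t) \mid \mathcal{F}_s] \le \tilde c(s)$. Build a quasi-simple strategy with rebalancing times $0 < s < t < T$ and $D$-positions $y_1 = y_3 = 0$, $y_2 = -\mathbf{1}_A$: on $A$ and no default by $s$, short one unit of $D$ for cash $c(s)$. Split the accompanying Black--Scholes strategy $x_2$ on $[s,T]$ into a replicating portfolio for the $\mathcal{F}_t$-measurable claim $\mathbf{1}_A c(t)$ at time $t$ (cost $\mathbf{1}_A E_{Q_{BS}}[\tilde c(t) \mid \mathcal{F}_s]$ at $s$) plus the residual $\mathbf{1}_A\bigl(\tilde c(s) - E_{Q_{BS}}[\tilde c(t) \mid \mathcal{F}_s]\bigr) \ge 0$ placed in the risk-free bond and held to $T$. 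On no default by $t$ the replicating part covers the buy-back of $D$ exactly and leaves the residual; on default in $(s, t]$ the strategy freezes at $\tilde V_{x_2}(T) = \mathbf{1}_A\bigl(\tilde c(t) + \text{residual}\bigr) > 0$; on default in $(0, s]$ or outside $A$ the value is zero. Using~(\ref{d7465edgd}), the set $A \cap \{s < \tau \le t\}$ has positive $P$-measure, so $V_\varphi$ satisfies $V_\varphi(0) = 0$, $V_\varphi(T) \ge 0$, and $V_\varphi(T) > 0$ with positive $P$-probability, contradicting NQSA.

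For (2) $\Rightarrow$ (1), given a quasi-simple $\varphi$ with $V_\varphi(0) = 0$ and $V_\varphi(T) \ge 0$, I would show $V_\varphi(T) = 0$. The workhorse is a \emph{measure argument}: by~(\ref{d7465edgd}), any $\mathcal{F}_T$-measurable random variable that is $\ge 0$ on an event $E \in \mathcal{G}_T$ with $P(E \mid \mathcal{F}_T) > 0$ a.s.\ must be $\ge 0$ a.s.\ on $\Omega$. Applying this to $V_\varphi(T) \ge 0$ on the branches $\{s_{n-1} < \tau \le s_n\}$ (for $n < N$) and $\{s_{N-1} < \tau \le T\}$ gives $\tilde V_{x_n}(T) \ge 0$ a.s.\ for each $n$, so every $\tilde V_{x_n}$ is a non-negative $Q_{BS}$-martingale; applying it on $\{T < \tau\}$ yields $\tilde V_{x_N}(T) + y_N\tilde c(T) \ge 0$ a.s. A backward induction on $n = N, N-1, \ldots, 2$ then establishes $\tilde V_{x_n}(s_{n-1}) + y_n\tilde c(s_{n-1}) \ge 0$ a.s.: the base case $n = N$ proceeds by case analysis on the sign of $y_N$, combining the two non-negativities via $E_{Q_{BS}}[\cdot \mid \mathcal{F}_{s_{N-1}}]$ and using $E_{Q_{BS}}[\tilde c(T) \mid \mathcal{F}_{s_{N-1}}] > \tilde c(s_{N-1})$ when $y_N < 0$; the inductive step at $n$ uses self-financing at $s_{n-1}$ to rewrite the hypothesis as $\tilde V_{x_{n-1}}(s_{n-1}) + y_{n-1}\tilde c(s_{n-1}) \ge 0$, pairs it with $\tilde V_{x_{n-1}}(s_{n-1}) \ge 0$, and repeats the case analysis at level $s_{n-2}$.

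Finally, a forward induction on $n = 1, \ldots, N$ shows $y_n = 0$ and $\tilde V_{x_n} \equiv 0$. Assuming this for $1, \ldots, n-1$, self-financing at $s_{n-1}$ gives $\tilde V_{x_n}(s_{n-1}) = -y_n\tilde c(s_{n-1})$, so the non-negativity of $\tilde V_{x_n}$ forces $y_n \le 0$. Self-financing at $s_n$ rewrites the previously-established inequality $\tilde V_{x_{n+1}}(s_n) + y_{n+1}\tilde c(s_n) \ge 0$ (or the original $\tilde V_{x_N}(T) + y_N\tilde c(T) \ge 0$ when $n = N$) as $\tilde V_{x_n}(s_n) + y_n\tilde c(s_n) \ge 0$, and taking $E_{Q_{BS}}[\cdot \mid \mathcal{F}_{s_{n-1}}]$ produces
\[
y_n\bigl(E_{Q_{BS}}[\tilde c(s_n) \mid \mathcal{F}_{s_{n-1}}] - \tilde c(s_{n-1})\bigr) \ge 0,
\]
which by strict submartingality forces $y_n \ge 0$. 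Hence $y_n = 0$, and $\tilde V_{x_n}$, being a non-negative $Q_{BS}$-martingale starting at $0$, vanishes identically, giving $V_\varphi \equiv 0$. The main obstacle throughout is the careful interplay between $\mathcal{F}_T$- and $\mathcal{G}_T$-measurability: every key inequality must be promoted to an a.s.\ statement on $\Omega$ before any martingale or conditional-expectation argument under $Q_{BS}$ can be applied, which is precisely the role of the measure argument via~(\ref{d7465edgd}).
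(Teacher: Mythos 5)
Your proof is correct. The direction \ref{Thm:NQSA:item1}$\,\Rightarrow\,$\ref{Thm:NQSA:item2} is essentially the paper's own argument in contrapositive form: short the defaultable bond on the bad event $A$, replicate its pre-default value $c(t)$ at the later date with part of the proceeds, bank the non-negative residual, and harvest a strictly positive payoff on $A\cap\{s<\tau\le t\}$, which has positive probability by~(\ref{d7465edgd}). The direction \ref{Thm:NQSA:item2}$\,\Rightarrow\,$\ref{Thm:NQSA:item1}, however, is organised genuinely differently. The paper first proves the claim under the extra hypothesis that $V_{\varphi}(s_{n})\geq0$ at every rebalancing date (by a forward induction very close to yours), and then removes that hypothesis by taking the last date $m$ at which $V_{\varphi}(s_{m})<0$ with positive probability and constructing an auxiliary quasi-simple strategy $\psi$ that is zero up to $s_{m}$ and restarts $\varphi$, shifted by $-V_{\varphi}(s_{m})$, on the offending event intersected with $\{s_{m}<\tau\}$; verifying that $\psi$ is again quasi-simple self-financing and deriving a contradiction takes about a page. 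You avoid the case split and the auxiliary strategy entirely: your backward induction propagates the terminal inequalities $U_{n}(T)\geq0$ and $U_{N}(T)+y_{N}\geq0$ back through the self-financing identities to show that the pre-default branch value $U_{n}(s_{n-1})+y_{n}c(s_{n-1})$ is non-negative at every date, after which the forward induction pinning down $y_{n}=0$ via the two opposing conditional-expectation inequalities is the same mechanism the paper uses. Both routes rest on the same three pillars --- the promotion of branchwise inequalities to almost-sure inequalities via~(\ref{d7465edgd}), the martingale property of admissible Black--Scholes strategies, and the strict submartingale inequality --- so the trade-off is purely structural: the paper's restart strategy keeps the argument within the language of arbitrage constructions, while your double induction is shorter and dispenses with the need to verify that a new strategy belongs to the admissible class. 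One small point of care in your backward induction: the case analysis on the sign of $y_{n-1}$ must be carried out on the $\mathcal{F}_{s_{n-2}}$-measurable events $\{y_{n-1}\geq0\}$ and $\{y_{n-1}<0\}$ before conditioning, and the conditional expectations should be localised on $\{|y_{n-1}|\leq M\}$ since no integrability of $y_{n-1}$ is assumed; these are routine and the paper glosses over the same issue.
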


\begin{proof}
Since we can work with discounted values, it is enough to consider the case
when $r=0$, that is, $B(t,T)=1$ for all $t\in\lbrack0,T]$.

We begin by showing that \ref{Thm:NQSA:item1}$\,\Rightarrow\,$%
\ref{Thm:NQSA:item2}. Suppose that the {NQSA} principle holds. To verify that
$c(t)$ is a strict $\left(  \mathcal{F}_{t}\right)  _{t\in\lbrack0,T]}%
$-submartingale under~$Q_{BS}$, we take any $t_{1},t_{2}\in\lbrack0,T]$ such
that $t_{1}<t_{2}$ and need to show that $c(t_{1})<\mathbb{E}_{Q_{BS}}\left(
c(t_{2})|\mathcal{F}_{t_{1}}\right)  $. Let
\[
A:=\left\{  c(t_{1})\geq\mathbb{E}_{Q_{BS}}\left(  c(t_{2})|\mathcal{F}%
_{t_{1}}\right)  \right\}  .
\]
Because $c(t_{2})$ is a random variable with values in $(0,1]$, it is square
integrable. Therefore, there exists an admissible self-financing strategy
$x=(x^{B},x^{S})$ in the Black--Scholes model that replicates the contingent
claim~$c(t_{2})$ at time~$t_{2}$, that is,%
\[
x^{B}(t_{2})+x^{S}(t_{2})S(t_{2})=c(t_{2}).
\]
The value of the strategy%
\[
x^{B}(t)+x^{S}(t)S(t)=\mathbb{E}_{Q_{BS}}\left(  c(t_{2})|\mathcal{F}%
_{t}\right)
\]
for any $t\in\lbrack0,t_{2}]$ is an $\left(  \mathcal{F}_{t}\right)
_{t\in\lbrack0,t_{2}]}$-martingale under~$Q_{BS}$ and has continuous paths. We
extend this strategy by putting $x(t):=(c(t_{2}),0)$ for any $t\in(t_{2},T]$.
Using this, we can construct a quasi-simple self-financing strategy as follows:

\begin{itemize}
\item Do nothing until time $t_{1}$.

\item At time $t_{1}$, if the event $A$ has occurred but no default has
happened yet, that is, $t_{1}<\tau$, then sell a single defaultable bond~$D$
for $c(t_{1})$, invest the amount $x^{B}(t_{1})+x^{S}(t_{1})S(t_{1}%
)=\mathbb{E}_{Q_{BS}}\left(  c(t_{2})|\mathcal{F}_{t_{1}}\right)  $ in the
strategy $x=(x^{B},x^{S})$, and put the balance of these transactions into the
non-defaultable bonds~$B$. Then follow the self-financing strategy
$x=(x^{B},x^{S})$ in the $BS$ segment of the market up to time~$t_{2}$.
Otherwise do nothing.

\item At time $t_{2}$ close all positions and invest the balance in the
non-defaultable bons $B$ until time\ $T$.
\end{itemize}

\noindent The precise formulas defining this strategy $\varphi=(\varphi
^{B},\varphi^{S},\varphi^{D})$ are%
\[%
\begin{tabular}
[c]{ll}%
$\varphi^{B}(t):=\varphi^{S}(t):=\varphi^{D}(t):=0$ & $\text{for }t\in
\lbrack0,t_{1}],$\\
& \\
$\varphi^{B}(t):=\left(  x^{B}(t)+c(t_{1})-\mathbb{E}_{Q_{BS}}\left(
c(t_{2})|\mathcal{F}_{t_{1}}\right)  \right)  \mathbf{1}_{A\cap\left\{
t_{1}<\tau\right\}  },\quad\quad$ & \\
$\varphi^{S}(t):=x^{S}(t)\mathbf{1}_{A\cap\left\{  t_{1}<\tau\right\}
},\ \varphi^{D}(t):=-\mathbf{1}_{A\cap\left\{  t_{1}<\tau\right\}  }$ &
$\text{for }t\in(t_{1},t_{2}],$\\
& \\
\multicolumn{2}{l}{$\varphi^{B}(t):=c(t_{2})\mathbf{1}_{A\cap\left\{
t_{1}<\tau\leq t_{2}\right\}  }+\left(  c(t_{1})-\mathbb{E}_{Q_{BS}}\left(
c(t_{2})|\mathcal{F}_{t_{1}}\right)  \right)  \mathbf{1}_{A\cap\left\{
t_{1}<\tau\right\}  },$}\\
$\varphi^{S}(t):=0,\ \varphi^{D}(t):=-\mathbf{1}_{A\cap\left\{  t_{1}<\tau\leq
t_{2}\right\}  }$ & $\text{for }t\in(t_{2},T].$%
\end{tabular}
\]
Consider the case when $0<t_{1}<t_{2}<T$ (the other cases when $t_{1}=0$ or
$t_{2}=T$ are similar and will be omitted for brevity). In
Definition~\ref{Def:957dh45d1a} we take $N:=3$ and $s_{0}:=0$, $s_{1}:=t_{1}$,
$s_{2}:=t_{1}$, $s_{3}:=T$. We also put%
\[%
\begin{tabular}
[c]{ll}%
$x_{1}(t):=(0,0),$ & $y_{1}:=0,$\\
$x_{2}(t):=1_{A}(x^{B}(t)+c(t_{1})-\mathbb{E}_{Q_{BS}}\left(  c(t_{2}%
)|\mathcal{F}_{t_{1}}\right)  ,x^{S}(t)),$ & $y_{2}:=-1_{A},$\\
$x_{3}(t):=1_{A}(c(t_{1})-\mathbb{E}_{Q_{BS}}\left(  c(t_{2})|\mathcal{F}%
_{t_{1}}\right)  ,0),$ & $y_{3}:=0.$%
\end{tabular}
\
\]
This gives the above strategy $\varphi=(\varphi^{B},\varphi^{S},\varphi^{D})$.
Its initial value is $V_{\varphi}(0)=0$ and final value is%
\[
V_{\varphi}(T)=c(t_{2})\mathbf{1}_{A\cap\left\{  t_{1}<\tau\leq t_{2}\right\}
}+\left(  c(t_{1})-\mathbb{E}_{Q_{BS}}\left(  c(t_{2})|\mathcal{F}_{t_{1}%
}\right)  \right)  \mathbf{1}_{A\cap\left\{  t_{1}<\tau\right\}  }.
\]
Since the {NQSA} principle holds and $c(t_{1})\geq\mathbb{E}_{Q_{BS}}\left(
c(t_{2})|\mathcal{F}_{t_{1}}\right)  $ on~$A$, we must have $P\left(
A\cap\left\{  t_{1}<\tau\leq t_{2}\right\}  \right)  =0$ given that
$c(t_{2})>0$. Because $A\in\mathcal{F}_{s}\subset\mathcal{F}_{T}$, it follows
by assumption~(\ref{d7465edgd}) that $P\left(  A\right)  =0$, proving that
$c(t)$ is indeed a strict $(\mathcal{F}_{t})_{t\in\lbrack0,T]}$-submartingale
under~$Q_{BS}$. This completes the proof of the implication
\ref{Thm:NQSA:item1}$\,\Rightarrow\,$\ref{Thm:NQSA:item2}.

To prove the implication \ref{Thm:NQSA:item2}$\,\Rightarrow\,$%
\ref{Thm:NQSA:item1}, we assume that $c(t)$ is a strict $(\mathcal{F}%
_{t})_{t\in\lbrack0,T]}$-submartingale under~$Q_{BS}$, and take any
quasi-simple self-financing strategy~$\varphi=\left(  \varphi^{B},\varphi
^{S},\varphi^{D}\right)  $ with $V_{\varphi}(0)=0$ and $V_{\varphi}(T)\geq0$.
To verify that the {NQSA} principle holds, we need to show that $V_{\varphi
}(T)=0$. Let $s_{n},x_{n},y_{n}$ be the corresponding sequences as in
Definition~\ref{Def:957dh45d1a}.

For each $n=1,\ldots,N$ and $t\in\lbrack s_{n-1},T]$, we put%
\[
U_{n}(t):=x_{n}^{B}(t)+x_{n}^{S}(t)S(t).
\]
Then, for each $n=1,\ldots,N$ and $t\in(s_{n-1},s_{n}]$, we have%
\begin{align}
V_{\varphi}(t)  &  =\varphi^{B}(t)+\varphi^{S}(t)S(t)+\varphi^{D}%
(t)D(t,T)=U_{n\wedge\mu}(t)+y_{n\wedge\mu}c(t)\mathbf{1}_{\left\{
t<\tau\right\}  }\nonumber\\
&  =\sum_{k=1}^{n-1}U_{k}(t)\mathbf{1}_{\left\{  s_{k-1}<\tau\leq
s_{k}\right\}  }+U_{n}(t)\mathbf{1}_{\left\{  s_{n-1}<\tau\right\}  }%
+y_{n}c(t)\mathbf{1}_{\left\{  t<\tau\right\}  }. \label{Eq:nv74oa0m}%
\end{align}

First, we consider the case when $V_{\varphi}(s_{n})\geq0$ for each
$n=0,\ldots,N$, and proceed by induction on~$n$ to show that $V_{\varphi
}(s_{n})=0$ for each $n=0,\ldots,N$. For $n=0$ we have $V_{\varphi}%
(s_{0})=V_{\varphi}(0)=0$. Suppose that $V_{\varphi}(s_{n-1})=0$ for some
$n=1,\ldots,N$. Self-financing at~$s_{n-1}$ means that%
\begin{align*}
0  &  =V_{\varphi}(s_{n-1})=\lim_{t\searrow s_{n-1}}V_{\varphi}(t)\\
&  =\sum_{k=1}^{n-1}U_{k}(s_{n-1})\mathbf{1}_{\left\{  s_{k-1}<\tau\leq
s_{k}\right\}  }+(U_{n}(s_{n-1})+y_{n}c(s_{n-1}))\mathbf{1}_{\left\{
s_{n-1}<\tau\right\}  }.
\end{align*}
It follows that%
\begin{align*}
U_{k}(s_{n-1})\mathbf{1}_{\left\{  s_{k-1}<\tau\leq s_{k}\right\}  }  &
=0\text{\quad for each }k=1,\ldots,n-1,\\
(U_{n}(s_{n-1})+y_{n}c(s_{n-1}))\mathbf{1}_{\left\{  s_{n-1}<\tau\right\}  }
&  =0.
\end{align*}
Because $U_{k}(s_{n-1})$ and $U_{n}(s_{n-1})+y_{n}c(s_{n-1})$ are
$\mathcal{F}_{T}$-measurable, it follows by assumption~(\ref{d7465edgd}) that
\begin{align*}
U_{k}(s_{n-1})  &  =0\text{\quad for each }k=1,\ldots,n-1,\\
U_{n}(s_{n-1})+y_{n}c(s_{n-1})  &  =0.
\end{align*}
The value of the strategy at time $s_{n}$ is%
\begin{align*}
0  &  \leq V_{\varphi}(s_{n})\\
&  =\sum_{k=1}^{n-1}U_{k}(s_{n})\mathbf{1}_{\left\{  s_{k-1}<\tau\leq
s_{k}\right\}  }+U_{n}(s_{n})\mathbf{1}_{\left\{  s_{n-1}<\tau\right\}
}+y_{n}c(s_{n})\mathbf{1}_{\left\{  s_{n}<\tau\right\}  }\\
&  =\sum_{k=1}^{n}U_{k}(s_{n})\mathbf{1}_{\left\{  s_{k-1}<\tau\leq
s_{k}\right\}  }+(U_{n}(s_{n})+y_{n}c(s_{n}))\mathbf{1}_{\left\{  s_{n}%
<\tau\right\}  }.
\end{align*}
Hence%
\begin{align*}
U_{k}(s_{n})\mathbf{1}_{\left\{  s_{k-1}<\tau\leq s_{k}\right\}  }  &
\geq0\text{\quad for each }k=1,\ldots,n,\\
(U_{n}(s_{n})+y_{n}c(s_{n}))\mathbf{1}_{\left\{  s_{n}<\tau\right\}  }  &
\geq0.
\end{align*}
Because $U_{k}(s_{n})$ and $U_{n}(s_{n})+y_{n}c(s_{n})$ are $\mathcal{F}_{T}%
$-measurable, it follows by assumption~(\ref{d7465edgd}) that%
\begin{align*}
U_{k}(s_{n})  &  \geq0\text{\quad for each }k=1,\ldots,n,\\
U_{n}(s_{n})+y_{n}c(s_{n})  &  \geq0.
\end{align*}
Since the value $U_{n}(t)$ of the admissible self-financing strategy
$x_{n}(t)$ in the Black--Scholes model is an $(\mathcal{F}_{t})_{t\geq0}%
$-martingale under~$Q_{BS}$, it follows that%
\begin{align*}
0  &  \leq\mathbb{E}_{Q_{BS}}(U_{n}(s_{n})+y_{n}c(s_{n})|\mathcal{F}_{s_{n-1}%
})=U_{n}(s_{n-1})+y_{n}\mathbb{E}_{Q_{BS}}\left(  c(s_{n})|\mathcal{F}%
_{s_{n-1}}\right) \\
&  =y_{n}\left(  \mathbb{E}_{Q_{BS}}\left(  c(s_{n})|\mathcal{F}_{s_{n-1}%
}\right)  -c(s_{n-1})\right)  .
\end{align*}
Because $\mathbb{E}_{Q_{BS}}\left(  c(s_{n})|\mathcal{F}_{s_{n-1}}\right)
>c(s_{n-1})$, we can see that $y_{n}\geq0$. On the other hand,
\[
0\leq\mathbb{E}_{Q_{BS}}(U_{n}(s_{n})|\mathcal{F}_{s_{n-1}})=U_{n}%
(s_{n-1})=-y_{n}c(s_{n-1}).
\]
Since $c(s_{n-1})>0$, it follows that $y_{n}\leq0$. Hence, we have shown that
$y_{n}=0$. As a result, $\mathbb{E}_{Q_{BS}}(U_{n}(s_{n})|\mathcal{F}%
_{s_{n-1}})=U_{n}(s_{n-1})=-y_{n}c(s_{n-1})=0$. Because $U_{n}(s_{n})\geq0$,
it follows that $U_{n}(s_{n})=0$. Moreover, for each $k=1,\ldots,n-1$, we have%
\[
\mathbb{E}_{Q_{BS}}(U_{k}(s_{n})|\mathcal{F}_{s_{n-1}})=U_{k}(s_{n-1})=0
\]
and $U_{k}(s_{n})\geq0$, which means that $U_{k}(s_{n})=0$.\ Hence,%
\[
V_{\varphi}(s_{n})=\sum_{k=1}^{n}U_{k}(s_{n})\mathbf{1}_{\left\{  s_{k-1}%
<\tau\leq s_{k}\right\}  }+(U_{n}(s_{n})+y_{n}c(s_{n}))\mathbf{1}_{\left\{
s_{n}<\tau\right\}  }=0,
\]
completing the induction step.

It remains to consider the case when $V_{\varphi}(s_{n})<0$ with positive
probability~$P$ for some $n=0,\ldots,N$. Let~$m$ be the largest integer~$n$
among $0,\ldots,N$ such that $V_{\varphi}(s_{n})<0$ with positive
probability~$P$. Clearly, $0<m<N$ since $V_{\varphi}(0)=0$ and $V_{\varphi
}(s_{N})=V_{\varphi}(T)\geq0$. Because, by (\ref{Eq:nv74oa0m}),%
\begin{align*}
0  &  \leq V_{\varphi}(s_{m+1})\\
&  =\sum_{k=1}^{m}U_{k}(s_{m+1})\mathbf{1}_{\left\{  s_{k-1}<\tau\leq
s_{k}\right\}  }+U_{m+1}(s_{m+1})\mathbf{1}_{\left\{  s_{m}<\tau\right\}
}+y_{m+1}c(s_{m+1})\mathbf{1}_{\left\{  s_{m+1}<\tau\right\}  }\\
&  =\sum_{k=1}^{m+1}U_{k}(s_{m+1})\mathbf{1}_{\left\{  s_{k-1}<\tau\leq
s_{k}\right\}  }+(U_{m+1}(s_{m+1})+y_{m+1}c(s_{m+1}))\mathbf{1}_{\left\{
s_{m+1}<\tau\right\}  },
\end{align*}
we can see that for each $k=1,\ldots,m+1$ we have $U_{k}(s_{m+1}%
)\mathbf{1}_{\left\{  s_{k-1}<\tau\leq s_{k}\right\}  }\geq0$, hence
$U_{k}(s_{m+1})\geq0$ by assumption~(\ref{d7465edgd}), which implies that
\[
0\leq\mathbb{E}_{Q_{BS}}(U_{k}(s_{m+1})|\mathcal{F}_{s_{m}})=U_{k}(s_{m}).
\]
Moreover, by (\ref{Eq:nv74oa0m}), we also have%
\begin{align*}
V_{\varphi}(s_{m})  &  =\sum_{k=1}^{m-1}U_{k}(s_{m})\mathbf{1}_{\left\{
s_{k-1}<\tau\leq s_{k}\right\}  }+U_{m}(s_{m})\mathbf{1}_{\left\{
s_{m-1}<\tau\right\}  }+y_{m}c(s_{m})\mathbf{1}_{\left\{  s_{m}<\tau\right\}
}\\
&  =\sum_{k=1}^{m}U_{k}(s_{m})\mathbf{1}_{\left\{  s_{k-1}<\tau\leq
s_{k}\right\}  }+(U_{m}(s_{m})+y_{m}c(s_{m}))\mathbf{1}_{\left\{  s_{m}%
<\tau\right\}  }.
\end{align*}
As a result,%
\[
V_{\varphi}(s_{m})\mathbf{1}_{\left\{  \tau\leq s_{m}\right\}  }=\sum
_{k=1}^{m}U_{k}(s_{m})\mathbf{1}_{\left\{  s_{k-1}<\tau\leq s_{k}\right\}
}\geq0.
\]
This implies that
\[
P\left(  V_{\varphi}(s_{m})<0\right)  =P\left(  \left\{  V_{\varphi}%
(s_{m})<0\right\}  \cap\left\{  s_{m}<\tau\right\}  \right)  .
\]
Since $\left\{  V_{\varphi}(s_{m})<0\right\}  \in\mathcal{G}_{s_{m}}$, it
follows by the properties of the enlarged filtration (for example, Lemma~5.27
in~\cite{CapZas2016}) that there is an $A\in\mathcal{F}_{s_{m}}$ such that
\[
\left\{  V_{\varphi}(s_{m})<0\right\}  \cap\left\{  s_{m}<\tau\right\}
=A\cap\left\{  s_{m}<\tau\right\}  .
\]
We define a new quasi-simple self-financing strategy $\psi=(\psi^{B},\psi
^{S},\psi^{D})$ by%
\[%
\begin{tabular}
[c]{ll}%
$\psi^{B}(t):=\psi^{S}(t):=\psi^{D}(t):=0$ & for $t\in\lbrack0,s_{m}],$\\
& \\
$\psi^{B}(t):=\left(  \varphi^{B}(t)-V_{\varphi}(s_{m})\right)  \mathbf{1}%
_{A\cap\left\{  s_{m}<\tau\right\}  },$ & \\
$\psi^{S}(t):=\varphi^{S}(t)\mathbf{1}_{A\cap\left\{  s_{m}<\tau\right\}  },$
$\psi^{D}(t):=\varphi^{D}(t)\mathbf{1}_{A\cap\left\{  s_{m}<\tau\right\}
}\quad$ & for $t\in(s_{m},T].$%
\end{tabular}
\
\]
For this strategy, we put%
\begin{align*}
\tilde{x}_{n}^{B}(t)  &  :=\mathbf{1}_{m<n}\mathbf{1}_{A}(x_{n}^{B}%
(t)-V_{\varphi}(s_{m})),\\
\tilde{x}_{n}^{S}(t)  &  :=\mathbf{1}_{m<n}\mathbf{1}_{A}x_{n}^{S}(t),\\
\tilde{y}_{n}  &  :=\mathbf{1}_{m<n}\mathbf{1}_{A}y_{n}%
\end{align*}
in place of $x_{n}^{B}(t),x_{n}^{S}(t),y_{n}$ in
Definition~\ref{Def:957dh45d1a}. Then, for $n=1,\ldots,m$ and $t\in
(s_{n-1},s_{n}]$, we have
\begin{align*}
\psi^{B}(t)  &  =\tilde{x}_{n\wedge\mu}^{B}(t)=0,\\
\psi^{S}(t)  &  =\tilde{x}_{n\wedge\mu}^{S}(t)=0,\\
\psi^{D}(t)  &  =\tilde{y}_{n\wedge\mu}=0.
\end{align*}
Moreover, for $n=m+1,\ldots,N$ and $t\in(s_{n},s_{n+1}]$, we have
\begin{align*}
\psi^{B}(t)  &  =\tilde{x}_{n\wedge\mu}^{B}(t)=\mathbf{1}_{\{m<n\wedge\mu
\}}\mathbf{1}_{A}(x_{n\wedge\mu}^{B}(t)-V_{\varphi}(s_{m}))\\
&  =\mathbf{1}_{\{m<\mu\}}\mathbf{1}_{A}(\varphi^{B}(t)-V_{\varphi}%
(s_{m}))=\mathbf{1}_{\{s_{m}<\tau\}}\mathbf{1}_{A}(\varphi^{B}(t)-V_{\varphi
}(s_{m})),\\
\psi^{S}(t)  &  =\tilde{x}_{n\wedge\mu}^{S}(t)=\mathbf{1}_{\{m<n\wedge\mu
\}}\mathbf{1}_{A}x_{n\wedge\mu}^{S}(t)\\
&  =\mathbf{1}_{\{m<\mu\}}\mathbf{1}_{A}\varphi^{S}(t)=\mathbf{1}%
_{\{s_{m}<\tau\}}\mathbf{1}_{A}\varphi^{S}(t),\\
\psi^{D}(t)  &  =\tilde{y}_{n\wedge\mu}=\mathbf{1}_{\{m<n\wedge\mu
\}}\mathbf{1}_{A}y_{n\wedge\mu}=\mathbf{1}_{\{m<\mu\}}\mathbf{1}_{A}%
\varphi^{D}(t)=\mathbf{1}_{\{s_{m}<\tau\}}\mathbf{1}_{A}\varphi^{D}(t).
\end{align*}
This means that $\psi$ is indeed a quasi-simple self-financing strategy. The
value of this strategy is%
\[%
\begin{tabular}
[c]{ll}%
$V_{\psi}(t)=0$ & $\text{for }t\in\lbrack0,s_{m}],$\\
$V_{\psi}(t)=\left(  V_{\varphi}(t)-V_{\varphi}(s_{m})\right)  \mathbf{1}%
_{A\cap\left\{  s_{m}<\tau\right\}  }\text{\quad}$ & $\text{for }t\in
(s_{m},T].$%
\end{tabular}
\
\]
Since $V_{\varphi}(s_{n})\geq0$ for each $n>m$ and $V_{\varphi}(s_{m})<0$ on
$A\cap\left\{  s_{m}<\tau\right\}  $, we have $V_{\psi}(s_{n})\geq0$ for
all~$n=0,1,\ldots,N$, that is, $\psi$~belongs to the class of strategies
considered earlier. Hence we know that $V_{\psi}(T)=0$. On the other hand,
$V_{\psi}(T)=\left(  V_{\varphi}(T)-V_{\varphi}(s_{m})\right)  >0$ on
$A\cap\left\{  s_{m}<\tau\right\}  $, so
\[
0=P\left(  A\cap\left\{  s_{m}<\tau\right\}  \right)  =P\left(  \left\{
V_{\varphi}(s_{m})<0\right\}  \cap\left\{  s_{m}<\tau\right\}  \right)
=P\left(  V_{\varphi}(s_{m})<0\right)  .
\]
This contradicts the definition of~$m$ and completes the proof of the theorem.
\end{proof}

\section{Survival and hazard processes}

In the literature, for example \cite{BieRut2002}, \cite{BieJeaRut2009}, where
the existence of a measure~$Q$ turning both $e^{-rt}S(t)$ and $e^{-rt}D(t,T)$
into $(\mathcal{G}_{t})_{t\in[0,T]}$-martingales is stipulated, the survival
process\ $G(t)$ and hazard process~$\Gamma(t)$ are defined in terms of~$Q$ as
\begin{equation}
G(t)=e^{-\Gamma(t)}=Q(t<\tau|\mathcal{F}_{T}). \label{s64hbfkapenk}%
\end{equation}

However, no such measure~$Q$ is given here, as our goal is to construct it
starting from minimalist assumptions. Because of this, we proceed in a
different fashion to define the survival process (hence also the hazard
process). The key is the result that the strictly positive process
$e^{-rt}c(t)$ is a strict submartingale with respect to the Black--Scholes
filtration $(\mathcal{F}_{t})_{t\in[0,T]}$ and measure~$Q_{BS}$, as shown in
Theorem~\ref{Thm:d64gsna90ms}. Therefore, we can apply the Doob--Meyer type
multiplicative decomposition for positive submartingales; see Yoeurp and
Meyer \cite{YoeMey1976} and Azema \cite{Azema1978}. This will lead to a
definition of the survival and hazard processes that does not involve~$Q$.

\begin{theorem}
\label{Thm:mvja83fab}Under the assumptions in Section~\ref{Sect:f745hhandta},
the following conditions are equivalent:

\begin{enumerate}
\item The process $e^{-rt}c(t)$ is a strict $\left(  \mathcal{F}_{t}\right)
_{t\in[0,T]}$-submartingale under~$Q_{BS}$;

\item \label{Thm:mvja83fab_ii}There is a strictly positive non-increasing
strict $(\mathcal{F}_{t})_{t\in\lbrack0,T]}$-supermartingale $G(t)$ under
$Q_{BS}$ such that $G(0)=1$, $G(t)$ has continuous paths, and $e^{-rt}%
c(t)G(t)$ is an $(\mathcal{F}_{t})_{t\in\lbrack0,T]}$-martingale under
$Q_{BS}$.
\end{enumerate}

\noindent Moreover, if such a process $G(t)$ exists, it is unique up to indistinguishability.
\end{theorem}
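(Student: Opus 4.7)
The plan is to obtain condition~\ref{Thm:mvja83fab_ii} from~(i) via a multiplicative Doob--Meyer decomposition of the bounded strictly positive submartingale $X(t):=e^{-rt}c(t)$, following Yoeurp--Meyer \cite{YoeMey1976} and Azema \cite{Azema1978}. Because $(\mathcal{F}_t)$ is the augmented Brownian filtration, every $(\mathcal{F}_t)$-martingale admits a continuous modification, so I can arrange for all factors in the decomposition to be continuous; moreover $X\in(0,1]$, which provides the boundedness and integrability required in what follows.

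For (i)~$\Rightarrow$~\ref{Thm:mvja83fab_ii}, I would apply Yoeurp--Meyer to write $X_t=M_tA_t$, where $M$ is a strictly positive continuous $(\mathcal{F}_t)$-martingale with $M_0=c(0)$, and $A$ is a continuous predictable non-decreasing process with $A_0=1$. Set $G(t):=1/A_t$. Then $G(0)=1$, $G$ is continuous and non-increasing, and $e^{-rt}c(t)G(t)=M_t$ is a martingale. Strict positivity of $G$ amounts to finiteness of $A$: since $A_t=X_t/M_t$ with $X_t\leq1$, and a continuous strictly positive martingale cannot reach $0$ in the Brownian filtration, $A$ stays finite on $[0,T]$. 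The strict supermartingale property of $G$ is the only delicate step: integration by parts (no bracket term, $A$ having continuous finite-variation paths) yields, for $t_1<t_2$,
\[
\mathbb{E}_{Q_{BS}}\!\left(X_{t_2}-X_{t_1}\,\big|\,\mathcal{F}_{t_1}\right)
=\mathbb{E}_{Q_{BS}}\!\left(\int_{t_1}^{t_2}M_s\,dA_s\,\bigg|\,\mathcal{F}_{t_1}\right),
\]
which is strictly positive by assumption. Since $M>0$ pathwise and $A$ is non-decreasing, this forces the event $\{A_{t_2}>A_{t_1}\}$ to carry strictly positive conditional probability given $\mathcal{F}_{t_1}$ almost surely; equivalently so does $\{G_{t_2}<G_{t_1}\}$, and combined with the pathwise inequality $G_{t_2}\leq G_{t_1}$ this gives $\mathbb{E}_{Q_{BS}}(G_{t_2}|\mathcal{F}_{t_1})<G_{t_1}$ a.s.

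For~\ref{Thm:mvja83fab_ii}~$\Rightarrow$~(i), write $e^{-rt}c(t)=M_t/G_t$ with $M_t:=e^{-rt}c(t)G(t)$ the given martingale. Monotonicity $G_{t_2}\leq G_{t_1}$ and strict positivity of $M_{t_2}$ give $M_{t_2}/G_{t_2}\geq M_{t_2}/G_{t_1}$, and the $\mathcal{F}_{t_1}$-measurability of $G_{t_1}$ then yields
\[
\mathbb{E}_{Q_{BS}}\!\left(e^{-rt_2}c(t_2)\,\big|\,\mathcal{F}_{t_1}\right)
\geq\frac{\mathbb{E}_{Q_{BS}}(M_{t_2}|\mathcal{F}_{t_1})}{G_{t_1}}
=e^{-rt_1}c(t_1).
\]
Strictness is inherited from the strict supermartingale property of $G$, which provides positive conditional probability for $\{G_{t_2}<G_{t_1}\}$ given $\mathcal{F}_{t_1}$, on which the preceding inequality between $M_{t_2}/G_{t_2}$ and $M_{t_2}/G_{t_1}$ is strict. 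Uniqueness up to indistinguishability reduces to the uniqueness of the Yoeurp--Meyer factorisation: two candidates $G_1,G_2$ induce two multiplicative decompositions $X=M_i\cdot(1/G_i)$ of the same strictly positive submartingale, and these must coincide.

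The main obstacle I anticipate is the clean invocation of Yoeurp--Meyer in precisely the form needed, with a continuous, strictly positive, predictable, non-decreasing factor, together with the careful handling of the \emph{strict} refinement, so that the strict submartingale hypothesis on $e^{-rt}c(t)$ translates through the integration-by-parts identity above into the strict supermartingale property of $G$; once that is in place, the two implications and uniqueness are essentially rearrangements of the decomposition.
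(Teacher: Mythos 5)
Your proposal is correct and follows essentially the same route as the paper: the Yoeurp--Meyer/Azema multiplicative decomposition of the bounded, strictly positive submartingale $e^{-rt}c(t)$ (your $A=1/G$), continuity of the factors via the Brownian filtration, transfer of strictness between $c$ and $G$ through a conditional-probability argument, and uniqueness from the uniqueness of the decomposition once one notes that a continuous adapted $G$ is previsible. The only presentational difference is that you derive the strictness equivalence via stochastic integration by parts (where one should check that $\int A\,dM$ is a true martingale, since $A=1/G$ need not be uniformly bounded), whereas the paper obtains the same equivalence by an elementary chain of conditional-expectation inequalities using only $G(t)\leq G(s)$ and $c>0$.
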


\begin{proof}
Without loss of generality, we can take $r=0$ to simplify the proof.

Suppose that $c(t)$ is an $\left(  \mathcal{F}_{t}\right)  _{t\in\lbrack0,T]}%
$-submartingale under~$Q_{BS}$. Because $c(t)$ is bounded, strictly positive,
with continuous paths and satisfies $c(T)=1$, all its paths are bounded away
from~$0$ on the closed interval $[0,T]$. Moreover, the augmented filtration
$(\mathcal{F}_{t})_{t\in\lbrack0,T]}$ generated by Brownian motion satisfies
the usual conditions. Therefore, Theorem~33 in \cite{Azema1978} (or Theorem~2
of \cite{YoeMey1976}) gives a unique multiplicative decomposition of $c(t)$ on
$[0,T]$. Namely, there is a unique (up to indistinguishability) strictly
positive non-increasing $(\mathcal{F}_{t})_{t\in\lbrack0,T]}$-previsible
process $G(t)$ on $[0,T]$ such that $G(0)=1$ and $c(t)G(t)$ is an
$(\mathcal{F}_{t})_{t\in\lbrack0,T]}$-martingale under~$Q_{BS}$. Because
$c(T)=1$, it follows that for each $t\in\lbrack0,T]$%
\[
c(t)G(t)=\mathbb{E}_{Q_{BS}}(G(T)|\mathcal{F}_{t}).
\]
Since $G(0)=1$ and $G(t)$ is positive and non-increasing, it follows that
$G(T)$ is bounded, so it is square integrable. Hence, $c(t)G(t)=\mathbb{E}%
_{Q_{BS}}(G(T)|\mathcal{F}_{t})$ has continuous paths by the martingale
representation theorem. It follows that $G(t)$ also has continuous paths given
that $c(t)$ does and is positive.

It remains to verify that, if $G(t)$ is a non-increasing process and
$c(t)G(t)$ is a martingale, then $c(t)$ is a strict submartingale if and only
if $G(t)$ is a strict supermartingale. Let $0\leq s<t\leq T$. Because
$\mathbb{E}_{Q_{BS}}(c(t)G(t)|\mathcal{F}_{s})=c(s)G(s)$, the inequality%
\[
c(s)<\mathbb{E}_{Q_{BS}}(c(t)|\mathcal{F}_{s})
\]
holds if and only if%
\[
\mathbb{E}_{Q_{BS}}(c(t)G(t)|\mathcal{F}_{s})<\mathbb{E}_{Q_{BS}%
}(c(t)G(s)|\mathcal{F}_{s}),
\]
that is, if and only if%
\[
\mathbb{E}_{Q_{BS}}(c(t)G(t)\mathbf{1}_{A})<\mathbb{E}_{Q_{BS}}%
(c(t)G(s)\mathbf{1}_{A})
\]
for every $A\in\mathcal{F}_{s}$ of positive measure. Because $G(t)\leq G(s)$
and $c(t)$ is strictly positive, this is so if and only if%
\[
\left\{  c(t)G(t)\mathbf{1}_{A}<c(t)G(s)\mathbf{1}_{A}\right\}  =\left\{
G(t)\mathbf{1}_{A}<G(s)\mathbf{1}_{A}\right\}
\]
is a set of positive measure for every $A\in\mathcal{F}_{s}$ of positive
measure. Using, the inequality $G(t)\leq G(s)$ once again, we can see that
this is, in turn, equivalent to%
\[
\mathbb{E}_{Q_{BS}}(G(t)|\mathcal{F}_{s})<G(s),
\]
completing the argument.

It remains to show the uniqueness of the process~$G(t)$ whose existence is
asserted in~\ref{Thm:mvja83fab_ii}. Indeed, since every left-continuous
adapted process is previsible, it follows that~$G(t)$ is previsible. The
uniqueness (up to indistinguishability) of the multiplicative decomposition of
the submartingale~$c(t)$ therefore gives that of the process~$G(t)$.
\end{proof}

\begin{definition}
\label{Def:f740aqpmno}\upshape We call the unique process $G(t)$ in
Theorem~\ref{Thm:mvja83fab} the \emph{survival process} and $\Gamma(t):=-\log
G(t)$ the \emph{hazard process}.
\end{definition}

In the next section we use $G(t)$ (or, equivalently, $\Gamma(t)$) to construct
a measure~$Q$ extending $Q_{BS}$ from the $\sigma$-algebra $\mathcal{F}_{T}$
to~$\mathcal{G}_{T}$ such that $e^{-rt}S(t)$ and $e^{-rt}D(t,T)$ become
$\{\mathcal{G}(t)\}_{t\in[0,T]}$-martingales under~$Q$. To achieve this we
construct~$Q$ in such a way that $G(t)$ and $\Gamma(t)$ will be expressed in
terms of~$Q$ as in~(\ref{s64hbfkapenk}). This justifies calling them the
\emph{survival process} and \emph{hazard process} in
Definition~\ref{Def:f740aqpmno}.

\section{\label{Sect:hf64bs9a}Construction of martingale measure}

A probability measure $Q$ extending $Q_{BS}$ from $\mathcal{F}_{T}$
to~$\mathcal{G}_{T}$ such that $G(t)$ is given by~(\ref{s64hbfkapenk}) would
need to satisfy
\begin{align}
Q(A\cap\left\{  s<\tau\leq t\right\}  )  &  =\mathbb{E}_{Q_{BS}}%
(\mathbf{1}_{A}(G(s)-G(t))),\label{fbhr6s63ga1}\\
Q(A\cap\{T<\tau\})  &  =\mathbb{E}_{Q_{BS}}(\mathbf{1}_{A}G(T))
\label{fbhr6s63ga2}%
\end{align}
for each $s,t\in\lbrack0,T]$ such that $s<t$ and each $A\in\mathcal{F}_{T}$.

With the aim of constructing such a measure~$Q$, we first construct a
measure~$\tilde{Q}$ on the space%
\[
\tilde{\Omega}:=\dot{\mathbb{R}}^{[0,T]}\times(0,\infty),
\]
where $\dot{\mathbb{R}}:=\mathbb{R}\cup\{\infty\}$ is the one-point
compactification of~$\mathbb{R}$, and then pull it back to $\Omega$ by the map
$(W,\tau):\Omega\rightarrow\tilde{\Omega}$ to obtain~$Q$. Here $\dot
{\mathbb{R}}^{[0,T]}$ can be regarded as the space of paths of the Brownian
motion~$W$ driving the Black--Scholes model, and $(0,\infty)$ as the space of
values of~$\tau$.

If the pulled-back measure $Q$ is to satisfy (\ref{fbhr6s63ga1})
and~(\ref{fbhr6s63ga2}), then we need to put%
\begin{align}
\tilde{Q}(A_{z,Z}\times(s,t])  &  :=\mathbb{E}_{Q_{BS}}(\mathbf{1}_{\{W\in
A_{z,Z}\}}(G(t)-G(s))),\label{p0w75nslnd21}\\
\tilde{Q}(A_{z,Z}\times(T,\infty))  &  :=\mathbb{E}_{Q_{BS}}(\mathbf{1}%
_{\{W\in A_{z,Z}\}}G(T)) \label{p0w75nslnd22}%
\end{align}
for each $s,t\in\lbrack0,T]$ such that $s<t$ and each \emph{cylindrical
set}$~A_{z,Z}$ in~$\dot{\mathbb{R}}^{[0,T]}$ of the form
\[
A_{z,Z}:=\{x\in\dot{\mathbb{R}}^{[0,T]}:(x(z_{1}),\ldots,x(z_{n}))\in Z\}
\]
for some non-negative integer~$n$, some $z=(z_{1},\ldots,z_{n})\in
\lbrack0,T]^{n}$ and some Borel set $Z\in\mathcal{B}(\dot{\mathbb{R}}^{n})$.

Next, $\tilde{Q}$ can be extended in the standard manner to an additive set
function on the algebra~$\tilde{\mathcal{A}}$ consisting of finite unions of
disjoint sets of the form $A_{z,Z}\times(s,t]$ or $A_{z,Z}\times(T,\infty)$,
where $s,t\in\lbrack0,T]$ with $s<t$, and where $A_{z,Z}$ is a cylindrical set
in~$\dot{\mathbb{R}}^{[0,T]}$. The next step is to show that $\tilde{Q}$ is a
measure (that is, it is countably additive) on the algebra~$\tilde
{\mathcal{A}}$.

\begin{lemma}
\label{Lem:hf649sja}$\tilde{Q}$ is a countably additive set function
on~$\tilde{\mathcal{A}}$.
\end{lemma}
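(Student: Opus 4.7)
The strategy is to recognise $\tilde Q$ as the pushforward of a concrete probability measure on an auxiliary space; once this is done, countable additivity on~$\tilde{\mathcal{A}}$ becomes automatic. To that end, I would enlarge the set-up by forming the product
\[
(\Omega^{\ast},\Sigma^{\ast},P^{\ast}):=(\Omega,\mathcal{F}_{T},Q_{BS})\otimes([0,1],\mathcal{B}([0,1]),\mathrm{Leb}),
\]
writing $u$ for the extra coordinate, and use the added randomness to define a new default time
\[
\tau^{\ast}(\omega,u):=\inf\{t\in[0,T]:G(t,\omega)<u\},\qquad \inf\emptyset:=T+1.
\]
Because, by Theorem~\ref{Thm:mvja83fab}, $G$ has continuous, strictly positive, non-increasing paths with $G(0,\omega)=1$ and $G(T,\omega)>0$, one has $\{\tau^{\ast}>t\}=\{G(t,\cdot)\ge u\}$, which yields joint measurability of~$\tau^{\ast}$, and, since $u\sim\mathrm{Unif}(0,1)$ is independent of~$\mathcal{F}_{T}$ under~$P^{\ast}$,
\[
P^{\ast}(\tau^{\ast}>t\mid\mathcal{F}_{T})=G(t)\text{ for }t\in[0,T],\qquad P^{\ast}(\tau^{\ast}>T\mid\mathcal{F}_{T})=G(T).
\]

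I would then let $\hat Q$ be the pushforward of~$P^{\ast}$ under $(\omega,u)\mapsto(W(\omega),\tau^{\ast}(\omega,u))$ into~$\tilde\Omega$. Conditioning on~$\mathcal{F}_{T}$ shows
\[
\hat Q(A_{z,Z}\times(s,t])=\mathbb{E}_{Q_{BS}}\bigl(\mathbf{1}_{\{W\in A_{z,Z}\}}(G(s)-G(t))\bigr)
\]
and $\hat Q(A_{z,Z}\times(T,\infty))=\mathbb{E}_{Q_{BS}}(\mathbf{1}_{\{W\in A_{z,Z}\}}G(T))$, which match the defining values of $\tilde Q$ from~(\ref{p0w75nslnd21})--(\ref{p0w75nslnd22}) on the generating rectangles. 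By finite additivity, which is built into the extension of~$\tilde Q$ to~$\tilde{\mathcal{A}}$, the set functions $\hat Q$ and $\tilde Q$ must then agree throughout~$\tilde{\mathcal{A}}$. Since $\hat Q$ is automatically countably additive as a pushforward of a probability measure along a measurable map, the same must hold for~$\tilde Q$.

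The main subtlety I anticipate is the preliminary step of verifying that $\tilde Q$ is \emph{well defined} as a finitely additive set function on~$\tilde{\mathcal{A}}$ in the first place, i.e.\ that the right-hand sides of~(\ref{p0w75nslnd21})--(\ref{p0w75nslnd22}) combine consistently under different disjoint representations of a given element of~$\tilde{\mathcal{A}}$. This is where the non-increasing, continuous character of~$G$ from Theorem~\ref{Thm:mvja83fab} pays off, since it makes the increments $G(s)-G(t)$ non-negative and legitimises the survival-function interpretation of $\tau^{\ast}$ used above. Once this consistency is established, the pushforward identification removes the countable-additivity question instantly, and no compactness-plus-$\emptyset$-continuity argument in the style of the classical Kolmogorov/Wiener construction is needed.
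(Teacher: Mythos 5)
Your argument is correct, but it takes a genuinely different route from the paper. The paper proves countable additivity intrinsically, in the style of the classical Wiener-measure construction: it exhibits a compact class $\mathcal{K}$ in $\tilde\Omega$ (using the one-point compactification $\dot{\mathbb{R}}^{[0,T]}$ and Tikhonov's theorem) and shows it approximates $\tilde{Q}$ on $\tilde{\mathcal{A}}$ from inside, invoking inner regularity of the Borel sets $Z_i$ together with the continuity and monotonicity of $G$ (to get $\mathbb{E}_{Q_{BS}}(G(s_i)-G(v_i))\searrow 0$ as $v_i\searrow s_i$, which handles the awkward half-open intervals); Theorem~1.4.3 of \cite{Bog2007} then yields countable additivity. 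You instead realise $\tilde{Q}$ outright as the law of $(W,\tau^{\ast})$, where $\tau^{\ast}$ is the canonical (Cox-type) default time obtained by inverting the survival process $G$ at an independent uniform variable; countable additivity is then inherited from the product measure, and, as a bonus, your identification also disposes of the well-definedness of the finitely additive extension to $\tilde{\mathcal{A}}$, which the paper passes over with ``in the standard manner''. What your approach buys is conceptual economy and a construction standard in the credit-risk literature; what the paper's buys is that it needs no auxiliary randomisation and stays entirely on $\tilde\Omega$. Two small points to tidy: exclude the Lebesgue-null event $u=1$ (or sample $u$ from the open interval) so that $\tau^{\ast}>0$ surely; and note that your increment $G(s)-G(t)$ is the correct (non-negative) one, consistent with~(\ref{fbhr6s63ga1}) and with the paper's own computations, even though the displayed formula~(\ref{p0w75nslnd21}) has the sign reversed, apparently a typo.
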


\begin{proof}
The following argument resembles in some respects the standard proof of
countable additivity of Wiener measure on path space.

Let $\mathcal{K}$ be the class of subsets in $\tilde{\Omega}$ of the form
$K_{u,U}\cup(L_{v,V}\times(T,\infty))$, where%
\begin{align*}
L_{u,U}  &  :=\{(x,t)\in\dot{\mathbb{R}}^{[0,T]}\times(0,T]:(x(u_{1}%
),\ldots,x(u_{k}),t)\in U\},\\
M_{v,V}  &  :=\{x\in\dot{\mathbb{R}}^{[0,T]}:(x(v_{1}),\ldots,x(v_{l}))\in
V\},
\end{align*}
for some positive integers \thinspace$l,m$, some $u=(u_{1},\ldots,u_{l}%
)\in\lbrack0,T]^{l}$, $v=(v_{1},\ldots,v_{m})\in\lbrack0,T]^{m}$ and some
compact sets $U\subset\mathbb{R}^{l}\times(0,T]$, $V\subset\mathbb{R}^{m}$.

We claim that $\mathcal{K}$ is a compact class in~$\tilde{\Omega}$ (see
Definition~1.4.1 in \cite{Bog2007}). To prove this, take any sequence of sets
$K_{n}\in\mathcal{K}$ such that $\bigcap_{n=1}^{\infty}K_{n}=\emptyset$. Then
$K_{n}=L_{u_{n},U_{n}}\cup(M_{v_{n},V_{n}}\times(T,\infty))$ for some positive
integers $l_{n},m_{n}$, some $u_{n}\in\lbrack0,T]^{l_{n}}$, $v_{n}\in
\lbrack0,T]^{m_{n}}$ and some compact sets~$U_{n}\subset\mathbb{R}^{l_{n}%
}\times(0,T]$, $V_{n}\subset\mathbb{R}^{m_{n}}$. We can see
(c.f.~Theorem~A5.17 in~\cite{Ash1972}) that $U_{n},V_{n}$ are closed subsets
in $\dot{\mathbb{R}}^{l_{n}}\times\lbrack0,T]$ and, respectively,~$\dot
{\mathbb{R}}^{m_{n}}$. Hence $L_{u_{n},U_{n}},M_{v_{n},V_{n}}$ are closed in
the product topology in $\dot{\mathbb{R}}^{[0,T]}\times\lbrack0,T]$ and,
respectively,~$\dot{\mathbb{R}}^{[0,T]}$. It also follows that $\bigcap
_{n=1}^{\infty}L_{u_{n},U_{n}}=\emptyset$ and $\bigcap_{n=1}^{\infty}%
M_{v_{n},V_{n}}=\emptyset$. By the Tikhonov theorem (the product of any family
of compact sets is compact in the product topology), $\dot{\mathbb{R}}%
^{[0,T]}\times\lbrack0,T]$ and $\dot{\mathbb{R}}^{[0,T]}$ are compact.
Therefore, there is an~$N$ such that $\bigcap_{n=1}^{N}L_{u_{n},U_{n}%
}=\emptyset$ and $\bigcap_{n=1}^{N}M_{v_{n},V_{n}}=\emptyset$, hence
$\bigcap_{n=1}^{N}K_{n}=\emptyset$. This proves that~$\mathcal{K}$ is a
compact class.

We also claim that~$\mathcal{K}$ is a class approximating the additive set
function~$\tilde{Q}$ on the algebra~$\tilde{\mathcal{A}}$, that is, for any
$A\in\tilde{\mathcal{A}}$ and $\varepsilon>0$ there exist $K_{\varepsilon}%
\in\mathcal{K}$ and $A_{\varepsilon}\in\tilde{\mathcal{A}}$ such that
$A_{\varepsilon}\subset K_{\varepsilon}\subset A$ and $\tilde{Q}(A\setminus
A_{\varepsilon})<\varepsilon$. (This definition of an approximating class
comes from Theorem~1.4.3 in~\cite{Bog2007}.) Take any $A\in\tilde{\mathcal{A}%
}$ and $\varepsilon>0$. We can write $A$ as
\begin{equation}
A=\bigcup_{i=1}^{m}(A_{z,Z_{i}}\times(s_{i},t_{i}])\cup(A_{z,Z_{m+1}}%
\times(T,\infty)) \label{fgrtdcs74jj9}%
\end{equation}
for some non-negative integer~$m$, some $s_{1},t_{1},\ldots,s_{m},t_{m}%
\in\lbrack0,T]$ such that%
\[
0\leq s_{1}<t_{1}\leq\cdots\leq s_{m}<t_{m}\leq T
\]
and some cylindrical sets $A_{z,Z_{1}},\ldots,A_{z,Z_{m}}$ and $A_{z,Z_{m+1}}$
in~$\dot{\mathbb{R}}^{[0,T]}$. In particular, note that the cylindrical sets
can be chosen to share the same tuple $z\in\mathbb{R}^{n}$ for some
non-negative integer~$n$, with $Z_{1},\ldots,Z_{m+1}\in\mathcal{B}%
(\dot{\mathbb{R}}^{n})$.

Let $\eta:=\frac{\varepsilon}{2m+1}$. Regularity of the Borel sets
$Z_{1},\ldots,Z_{m+1}$ (see Theorem~1.4.8 in~\cite{Bog2007}) implies that
there are compact sets $F_{1},\ldots,F_{m+1}\subset\mathbb{R}^{n}$ such that
for each $i=1,\ldots,m+1$ we have $F_{i}\subset Z_{i}$ and%
\[
Q_{BS}\{W\in A_{z,A_{i}\setminus F_{i}}\}<\eta.
\]
Moreover, since $G$ has non-increasing continuous paths, it follows that for
each $i=1,\ldots,m$ we have $G(s_{i})-G(t)\searrow0$ as $t\searrow s_{i}$. By
monotone convergence, it follows that $\mathbb{E}_{Q_{BS}}(G(s_{i}%
)-G(t))\searrow0$ as $t\searrow s_{i}$. Hence there is a $v_{i}\in(s_{i}%
,t_{i})$ such that%
\[
\mathbb{E}_{Q_{BS}}(G(s_{i})-G(v_{i}))<\eta.
\]
Next, we take some $w_{i}\in(s_{i},v_{i})$ for each $i=1,\ldots,m$ and put%
\begin{align*}
K_{\varepsilon}  &  :=\bigcup_{i=1}^{m}(A_{z,F_{i}}\times\lbrack w_{i}%
,t_{i}])\cup(A_{z,F_{m+1}}\times(T,\infty)),\\
A_{\varepsilon}  &  :=\bigcup_{i=1}^{m}(A_{z,F_{i}}\times(v_{i},t_{i}%
])\cup(A_{z,F_{m+1}}\times(T,\infty)).
\end{align*}
Clearly, $A_{\varepsilon}\in\tilde{\mathcal{A}}$, $K_{\varepsilon}%
\in\mathcal{K}$ and $A_{\varepsilon}\subset K_{\varepsilon}\subset A$. Since%
\begin{align*}
&  A\setminus A_{\varepsilon}\\
&  =\bigcup_{i=1}^{m}\left(  A_{z,Z_{i}}\times(s_{i},v_{i}]\right)
\cup\bigcup_{i=1}^{m}\left(  A_{z,Z_{i}\setminus F_{i}}\times(v_{i}%
,t_{i}]\right)  \cup(A_{z,Z_{m+1}\setminus F_{m+1}}\times(T,\infty)),
\end{align*}
which is a union of disjoint sets, it follows that $\tilde{Q}(A\setminus
A_{\varepsilon})$ is the sum of the following three terms:%
\begin{align*}
\sum_{i=1}^{m}\tilde{Q}\left(  A_{z,Z_{i}}\times(s_{i},v_{i}]\right)   &
=\sum_{i=1}^{m}\mathbb{E}_{Q_{BS}}(1_{\{W\in A_{z,Z_{i}}\}}(G(s_{i}%
)-G(v_{i})))\\
&  \leq\sum_{i=1}^{m}\mathbb{E}_{Q_{BS}}\left[  G(s_{i})-G(v_{i})\right]
<m\eta,\\
\sum_{i=1}^{m}\tilde{Q}\left(  A_{z,Z_{i}\setminus F_{i}}\times(v_{i}%
,t_{i}]\right)   &  =\sum_{i=1}^{m}\mathbb{E}_{Q_{BS}}(1_{\{W\in
A_{z,Z_{i}\setminus F_{i}}\}}(G(v_{i})-G(t_{i})))\\
&  \leq\sum_{i=1}^{m}Q_{BS}(W\in A_{z,Z_{i}\setminus F_{i}})\leq m\eta,\\
\tilde{Q}(A_{z,Z_{m+1}\setminus F_{m+1}}\times(T,\infty))  &  =\mathbb{E}%
_{Q_{BS}}(1_{\{W\in A_{z,Z_{m+1}\setminus F_{m+1}}\}}G(T))\\
&  \leq Q_{BS}(W\in A_{z,Z_{m+1}\setminus F_{m+1}})<\eta.
\end{align*}
Hence $\tilde{Q}(A\setminus A_{\varepsilon})<m\eta+m\eta+\eta=(2m+1)\eta
<\varepsilon$, which shows that~$\mathcal{K}$ is a class approximating~$\tilde
{Q}$ on the algebra~$\tilde{\mathcal{A}}$. By Theorem~1.4.3 in~\cite{Bog2007},
it follows that~$\tilde{Q}$ is a countably additive set function
on~$\tilde{\mathcal{A}}$, completing the proof.
\end{proof}

\medskip

Next, we introduce the algebra~$\mathcal{A}$ of subsets of~$\Omega$ of the
form $\{(W,\tau)\in A\}$ such that $A\in\tilde{\mathcal{A}}$, and define a set
function~$Q$ on~$\mathcal{A}$ by%
\begin{equation}
Q((W,\tau)\in A):=\tilde{Q}(A) \label{ns74ma0r4nk}%
\end{equation}
for each $A\in\tilde{\mathcal{A}}$. The following lemma is needed to show that
$Q$ is well defined and countably additive on~$\mathcal{A}$.

\begin{lemma}
\label{Lem:mnf6sw6fdgbv}Let $A\in\tilde{\mathcal{A}}$ be such that $\left\{
(W,\tau)\in A\right\}  =\emptyset$. Then $\tilde{Q}(A)=0$.
\end{lemma}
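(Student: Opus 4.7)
My plan is to reduce the statement to individual ``rectangular'' pieces of the algebra $\tilde{\mathcal{A}}$ and then to invoke the no-gaps condition~(\ref{d7465edgd}) on $\tau$ piece by piece.

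Concretely, any $A\in\tilde{\mathcal{A}}$ can be written as a finite disjoint union of sets of the form $A_{z,Z_{i}}\times(s_{i},t_{i}]$ together with possibly one set $A_{z,Z_{m+1}}\times(T,\infty)$, exactly as in~(\ref{fgrtdcs74jj9}), with a common base tuple $z$, as the excerpt notes is always achievable. Since the preimage map $(W,\tau)^{-1}$ commutes with unions and the total preimage $\{(W,\tau)\in A\}$ is assumed empty, each individual set $\{W\in A_{z,Z_{i}}\}\cap\{s_{i}<\tau\leq t_{i}\}$, and similarly $\{W\in A_{z,Z_{m+1}}\}\cap\{T<\tau\}$, must be empty in $\Omega$, hence in particular $P$-null.

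The central step is then to invoke the contrapositive of assumption~(\ref{d7465edgd}): since $\{W\in A_{z,Z_{i}}\}\in\mathcal{F}_{T}$, the fact that its intersection with $\{s_{i}<\tau\leq t_{i}\}$ is $P$-null forces $P(W\in A_{z,Z_{i}})=0$, and hence also $Q_{BS}(W\in A_{z,Z_{i}})=0$ by the equivalence of the two measures on~$\mathcal{F}_{T}$. For the unbounded piece I would apply the same argument with the window $(T,t]$ for any fixed $t>T$, noting that $\{W\in A_{z,Z_{m+1}}\}\cap\{T<\tau\leq t\}\subset\{W\in A_{z,Z_{m+1}}\}\cap\{T<\tau\}=\emptyset$, so again $Q_{BS}(W\in A_{z,Z_{m+1}})=0$. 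Substituting these vanishing indicators into the defining formulas~(\ref{p0w75nslnd21}) and~(\ref{p0w75nslnd22}) makes every rectangular piece $\tilde{Q}$-null, and the additivity of $\tilde{Q}$ on $\tilde{\mathcal{A}}$ (Lemma~\ref{Lem:hf649sja}) then delivers $\tilde{Q}(A)=0$.

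I do not anticipate a serious obstacle: the content of the lemma really reduces to the interplay between the no-gaps hypothesis on~$\tau$ and the product structure of~$\tilde{\mathcal{A}}$. The only mild care to take is to use a disjoint representation with a common tuple $z$, so that the empty-preimage observation splits cleanly across the pieces; the rest is routine.
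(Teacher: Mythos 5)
Your argument is correct and is essentially the paper's own proof: decompose $A$ into the rectangular pieces of~(\ref{fgrtdcs74jj9}) with a common tuple~$z$, observe that emptiness of the preimage forces each piece $\{W\in A_{z,Z_{i}}\}\cap\{s_{i}<\tau\leq t_{i}\}$ (and the unbounded piece) to be empty, invoke assumption~(\ref{d7465edgd}) to get $Q_{BS}(W\in A_{z,Z_{i}})=0$, and sum using additivity of~$\tilde{Q}$. Your explicit use of a finite window $(T,t]$ with $t>T$ to handle the unbounded piece is a small point of care that the paper leaves implicit, but it is the same argument.
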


\begin{proof}
We can write $A$ as in (\ref{fgrtdcs74jj9}). Then%
\[
\left\{  (W,\tau)\in A\right\}  =\bigcup_{i=1}^{m}(\{W\in A_{z,Z_{i}}%
\}\cap\{s_{i}<\tau\leq t_{i}\})\cup(\{W\in A_{z,Z_{m+1}}\}\cap\{T<\tau\}).
\]
Since $\left\{  (W,\tau)\in A\right\}  =\emptyset$, it follows that $\{W\in
A_{z,Z_{i}}\}\cap\{s_{i}<\tau\leq t_{i}\}=\emptyset$ for each $i=1,\ldots,m$
and $\{W\in A_{z,Z_{m+1}}\}\cap\{T<\tau\}=\emptyset$.
Assumption~(\ref{d7465edgd}) implies that $P(W\in A_{z,Z_{i}})=0$, hence
$Q_{BS}(W\in A_{z,Z_{i}})=0$ for each $i=1,\ldots,m+1$. As a result,%
\begin{align*}
\tilde{Q}(A)  &  =\sum_{i=1}^{m}\tilde{Q}(A_{z,Z_{i}}\times(s_{i}%
,t_{i}])+\tilde{Q}(A_{u,Zm+1}\times(T,\infty))\\
&  =\sum_{i=1}^{m}\mathbb{E}_{Q_{BS}}(1_{\{W\in A_{z,Z_{i}}\}}(G(s_{i}%
)-G(t_{i})))+\mathbb{E}_{Q_{BS}}(1_{\{W\in A_{z,Z_{m+1}}\}}G(T))=0,
\end{align*}
completing the proof.
\end{proof}

\begin{proposition}
The set function $Q$ is well defined on~$\mathcal{A}$ by $(\ref{ns74ma0r4nk}%
)$, that is, if $\{(W,\tau)\in A\}=\{(W,\tau)\in B\}$ for some $A,B\in
\tilde{\mathcal{A}}$, then $\tilde{Q}(A)=\tilde{Q}(B)$.
\end{proposition}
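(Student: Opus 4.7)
The plan is to reduce the well-definedness to Lemma~\ref{Lem:mnf6sw6fdgbv} via a symmetric-difference argument, using the fact that the algebra~$\tilde{\mathcal{A}}$ is closed under set differences and that preimages under the map $(W,\tau)\colon\Omega\to\tilde{\Omega}$ commute with Boolean operations.

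First, given $A,B\in\tilde{\mathcal{A}}$ with $\{(W,\tau)\in A\}=\{(W,\tau)\in B\}$, I would form $A\setminus B$ and $B\setminus A$. Since~$\tilde{\mathcal{A}}$ is an algebra, both sets belong to $\tilde{\mathcal{A}}$. Because preimages preserve complements and intersections, we have
\[
\{(W,\tau)\in A\setminus B\}=\{(W,\tau)\in A\}\setminus\{(W,\tau)\in B\}=\emptyset,
\]
and symmetrically $\{(W,\tau)\in B\setminus A\}=\emptyset$.

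Second, I would apply Lemma~\ref{Lem:mnf6sw6fdgbv} to conclude $\tilde{Q}(A\setminus B)=0$ and $\tilde{Q}(B\setminus A)=0$. Then, since $A\cap B\in\tilde{\mathcal{A}}$ and $A=(A\cap B)\sqcup(A\setminus B)$, $B=(A\cap B)\sqcup(B\setminus A)$ are disjoint decompositions, finite additivity of~$\tilde{Q}$ on~$\tilde{\mathcal{A}}$ (which is a special case of the countable additivity established in Lemma~\ref{Lem:hf649sja}) gives
\[
\tilde{Q}(A)=\tilde{Q}(A\cap B)+\tilde{Q}(A\setminus B)=\tilde{Q}(A\cap B)=\tilde{Q}(A\cap B)+\tilde{Q}(B\setminus A)=\tilde{Q}(B),
\]
which is the desired equality.

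There is no real obstacle here: the entire content is already packaged in Lemma~\ref{Lem:mnf6sw6fdgbv}, and the only thing to verify beyond its invocation is the routine observation that $A\setminus B,B\setminus A\in\tilde{\mathcal{A}}$ and that the preimage of a difference is the difference of preimages. The proof should therefore be only a few lines long.
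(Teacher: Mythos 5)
Your proof is correct and follows essentially the same route as the paper: reduce to the symmetric difference $A\vartriangle B$, whose preimage under $(W,\tau)$ is empty, and invoke Lemma~\ref{Lem:mnf6sw6fdgbv} together with finite additivity of~$\tilde{Q}$. The paper states this more tersely (applying the lemma to $A\vartriangle B$ at once rather than to $A\setminus B$ and $B\setminus A$ separately), but the content is identical.
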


\begin{proof}
If $\{(W,\tau)\in A\}=\{(W,\tau)\in B\}$, then $\{(W,\tau)\in A\vartriangle
B\}=\emptyset$, where $A\vartriangle B=(A\setminus B)\cup(B\setminus A)$
denotes the symmetric difference. By Lemma~\ref{Lem:mnf6sw6fdgbv}, it follows
that $\tilde{Q}(A\vartriangle B)=0$, hence $\tilde{Q}(A)=\tilde{Q}(B)$.
\end{proof}

\begin{proposition}
\label{Cor:nfrt6434ad}$Q$ is a countably additive function on the
algebra~$\mathcal{A}$.
\end{proposition}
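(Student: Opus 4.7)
The plan is to deduce countable additivity of $Q$ on $\mathcal{A}$ from the already established countable additivity of $\tilde{Q}$ on $\tilde{\mathcal{A}}$ (Lemma~\ref{Lem:hf649sja}), using Lemma~\ref{Lem:mnf6sw6fdgbv} to cope with the non-surjectivity of $\Phi:=(W,\tau)\colon\Omega\to\tilde{\Omega}$.

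First I would take pairwise disjoint $B_n\in\mathcal{A}$ with $B:=\bigsqcup_n B_n\in\mathcal{A}$ and pick representatives $A_n,A\in\tilde{\mathcal{A}}$. Using that $\tilde{\mathcal{A}}$ is an algebra, I refine to pairwise disjoint $\hat{A}_n:=(A\cap A_n)\setminus\bigcup_{k<n}(A\cap A_k)\in\tilde{\mathcal{A}}$ contained in~$A$; Lemma~\ref{Lem:mnf6sw6fdgbv} applied to the symmetric differences $A_n\vartriangle\hat{A}_n$ (whose preimages vanish by pairwise disjointness of the~$B_n$) gives $\{(W,\tau)\in\hat{A}_n\}=B_n$ and $\tilde{Q}(\hat{A}_n)=Q(B_n)$. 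Setting $R_N:=A\setminus\bigcup_{n=1}^N\hat{A}_n\in\tilde{\mathcal{A}}$, finite additivity of $\tilde{Q}$ yields
\[
Q(B)-\sum_{n=1}^N Q(B_n)=\tilde{Q}(R_N),
\]
so the proposition reduces to showing $\tilde{Q}(R_N)\to 0$. The decreasing sets $R_N$ satisfy $\{(W,\tau)\in R_N\}=B\setminus\bigsqcup_{n\leq N}B_n\downarrow\emptyset$ in~$\Omega$, even though $\bigcap_N R_N$ may be non-empty in~$\tilde{\Omega}$.

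To force $\tilde{Q}(R_N)\to 0$ I would rerun the compact-class argument from the proof of Lemma~\ref{Lem:hf649sja}. Assume for contradiction that $\tilde{Q}(R_N)\geq\delta>0$ for all~$N$; for each~$n$ pick $K_n\in\mathcal{K}$ and $E_n\in\tilde{\mathcal{A}}$ with $E_n\subseteq K_n\subseteq R_n$ and $\tilde{Q}(R_n\setminus E_n)<\delta\cdot 2^{-n-2}$, and set $E_N^{\star}:=\bigcap_{n\leq N}E_n\in\tilde{\mathcal{A}}$ and $K_N^{\star}:=\bigcap_{n\leq N}K_n$. A telescoping estimate, using the monotonicity of~$R_n$, yields $\tilde{Q}(E_N^{\star})>3\delta/4$ for every~$N$, whence Lemma~\ref{Lem:mnf6sw6fdgbv} forces each $\Phi^{-1}(E_N^{\star})\neq\emptyset$. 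Moreover, $K_N^{\star}\supseteq E_N^{\star}$ gives outer measure $\tilde{Q}^{\ast}(K_N^{\star})\geq 3\delta/4$, so the compact sets $K_N^{\star}$ form a non-empty decreasing sequence in~$\tilde{\Omega}$ and Tikhonov compactness produces $\tilde{\omega}\in\bigcap_N K_N^{\star}\subseteq\bigcap_N R_N$.

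The main obstacle is the last step: extracting a contradiction from the coexistence of positive $\tilde{Q}$-content on $\bigcap_N R_N$ and the empty $\Phi$-preimage $\Phi^{-1}(\bigcap_N R_N)=\emptyset$. The cleanest resolution should extend $\tilde{Q}$ to a measure on $\sigma(\tilde{\mathcal{A}})$ via Carath\'eodory and upgrade Lemma~\ref{Lem:mnf6sw6fdgbv} to the $\sigma$-algebra level — proving that every $F\in\sigma(\tilde{\mathcal{A}})$ with $\Phi^{-1}(F)=\emptyset$ satisfies $\tilde{Q}(F)=0$ — by exploiting the cylindrical structure of sets in~$\mathcal{K}$ together with the full conditional support of~$\tau$ given~$\mathcal{F}_T$ guaranteed by assumption~(\ref{d7465edgd}). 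Continuity from above of the extended measure would then give $\tilde{Q}(R_N)\downarrow\tilde{Q}(\bigcap_N R_N)=0$, contradicting $\tilde{Q}(R_N)\geq\delta$.
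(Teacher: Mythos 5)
Your opening reduction is sound and coincides with the paper's: the paper forms the disjoint sets $D_n:=A_n\setminus\bigl((A_1\cap A_n)\cup\cdots\cup(A_{n-1}\cap A_n)\bigr)$, uses Lemma~\ref{Lem:mnf6sw6fdgbv} to get $\tilde{Q}(D_n)=\tilde{Q}(A_n)=Q(B_n)$, and then writes $Q\bigl(\bigcup_n B_n\bigr)=\tilde{Q}\bigl(\bigcup_n A_n\bigr)=\sum_n\tilde{Q}(D_n)$ by appeal to Lemma~\ref{Lem:hf649sja}. Since $\bigcup_n A_n$ need not belong to $\tilde{\mathcal{A}}$, the first of these equalities is exactly the identity you isolate as the ``main obstacle'' (equivalently, $\tilde{Q}(R_N)\to0$); the paper passes over it, you do not. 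The problem is that your proposal then stops: everything after ``the main obstacle is the last step'' describes what would have to be proved rather than proving it, and that missing step is the entire non-routine content of the proposition. Your compact-class detour, as you yourself observe, only produces a point of $\bigcap_N R_N$, which may perfectly well be non-empty in $\tilde{\Omega}$, so it yields no contradiction and can be deleted.

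Moreover, the resolution you sketch --- extending $\tilde{Q}$ to $\sigma(\tilde{\mathcal{A}})$ and upgrading Lemma~\ref{Lem:mnf6sw6fdgbv} to every $F\in\sigma(\tilde{\mathcal{A}})$ with $\Phi^{-1}(F)=\emptyset$, where $\Phi:=(W,\tau)$ --- does not follow from assumption~(\ref{d7465edgd}) in the way you suggest. That assumption says the conditional law of $\tau$ given $\mathcal{F}_T$ charges every interval; it disposes of sets in the algebra $\tilde{\mathcal{A}}$ precisely because their $\tau$-sections are finite unions of intervals, and a nondegenerate interval cannot be missed by $\tau$. A general $F\in\sigma(\tilde{\mathcal{A}})$ has arbitrary Borel $\tau$-sections, and a Borel set can be entirely avoided by $\tau$ while still carrying positive mass for the measure $-dG(\cdot,\omega)$: nothing in the hypotheses ties $-dG$ to the conditional law of $\tau$, and $G$, being merely continuous and non-increasing, may have a singular part concentrated on a Cantor-type set that $\tau$ never visits. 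Taking $R_N$ to be $(0,T]$ with the first $N$ complementary intervals of such a set removed gives decreasing $R_N\in\tilde{\mathcal{A}}$ with $\Phi^{-1}(R_N)\downarrow\emptyset$ but $\tilde{Q}(R_N)$ bounded away from~$0$. So closing your argument requires an additional input (for instance, that $-dG(\cdot,\omega)$ does not charge Borel sets null for the conditional law of $\tau$), not just a more careful rerun of the cylindrical-set machinery. In short: correct reduction, correctly located difficulty, but the difficulty is left unresolved, and the route you propose for resolving it needs more than the assumptions you invoke.
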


\begin{proof}
Let $B_{n}\in\mathcal{A}$ be a sequence of disjoint sets such that
$\bigcup_{n=1}^{\infty}B_{n}\in\mathcal{A}$. For each~$n$ we can write
$B_{n}=\{(W,\tau)\in A_{n}\}$ for some $A_{n}\in\tilde{\mathcal{A}}$. For any
$n\neq m$, since $\{(W,\tau)\in A_{n}\cap A_{m}\}=B_{n}\cap B_{m}=\emptyset$,
it follows by Lemma~\ref{Lem:mnf6sw6fdgbv} that $\tilde{Q}(A_{n}\cap A_{m}%
)=0$. We put
\[
D_{n}:=A_{n}\setminus(\left(  A_{1}\cap A_{n}\right)  \cup\cdots\cup\left(
A_{n-1}\cap A_{n}\right)  ),
\]
so that $\tilde{Q}(A_{n})=\tilde{Q}(D_{n})$ for each~$n$. The sets $D_{n}$ are
pairwise disjoint and $\bigcup_{n=1}^{\infty}A_{n}=\bigcup_{n=1}^{\infty}%
D_{n}$. By the countable additivity of $\tilde{Q}$ in~$\tilde{\mathcal{A}}$
(see Lemma~\ref{Lem:hf649sja}), it follows that%
\begin{align*}
Q\left(  \bigcup_{n=1}^{\infty}B_{n}\right)   &  =\tilde{Q}\left(
\bigcup_{n=1}^{\infty}A_{n}\right)  =\tilde{Q}\left(  \bigcup_{n=1}^{\infty
}D_{n}\right) \\
&  =\sum_{n=0}^{\infty}\tilde{Q}(D_{n})=\sum_{n=0}^{\infty}\tilde{Q}%
(A_{n})=\sum_{n=0}^{\infty}Q(B_{n}),
\end{align*}
proving countable additivity of $Q$ on~$\mathcal{A}$.\medskip
\end{proof}

Next, the Lebesgue's extension of measures (see Theorem~1.5.6
in~\cite{Bog2007}) applied to the non-negative countably additive measure~$Q$
on the algebra~$\mathcal{A}$ gives a non-negative measure, denoted by the same
symbol~$Q$, on the $\sigma$-algebra\ $\sigma(\mathcal{A})$ generated
by~$\mathcal{A}$. The final step is to extend~$Q$ to $\mathcal{G}_{T}%
=\sigma(\mathcal{F}_{T}\cup\mathcal{I}_{T})$. We put%
\begin{align*}
\mathcal{H}_{T} &  :=\sigma(W_{t},t\in\lbrack0,T]),\\
\mathcal{N}_{T} &  :=\{A\in\Sigma:A\subset B\text{ for some }B\in
\mathcal{H}_{T}\text{ such that }P(B)=0\}.
\end{align*}
Then%
\[
\mathcal{F}_{T}=\sigma(\mathcal{H}_{T}\cup\mathcal{N}_{T}),\quad
\mathcal{G}_{T}=\sigma(\mathcal{H}_{T}\cup\mathcal{N}_{T}\cup\mathcal{I}%
_{T}),\quad\sigma(\mathcal{A})=\sigma(\mathcal{H}_{T}\cup\mathcal{I}_{T}).
\]
Observe that%
\[
\mathcal{G}_{T}=\{A\in\Sigma:A\vartriangle B\in\mathcal{N}_{T}\text{ for some
}B\in\sigma(\mathcal{A})\}.
\]
Now, for any $A\in\mathcal{G}_{T}$, we put%
\[
Q(A):=Q(B)
\]
for any $B\in\sigma(\mathcal{A})$\ such that $A\vartriangle B\in
\mathcal{N}_{T}$. This does not depend on the choice of such~$B$ and defines a
non-negative measure~$Q$ on the $\sigma$-algebra~$\mathcal{G}_{T}$. It is a
probability measure since%
\[
Q(\Omega)=Q(0<\tau)=Q_{BS}(G(0))=1.
\]

In the next three propositions we show that the probability measure~$Q$
constructed above has the desired properties, namely:

\begin{itemize}
\item $Q$ coincides with the Black--Scholes risk neutral measure~$Q_{BS}$
on\ the $\sigma$-algebra~$\mathcal{F}_{T}$;

\item $Q$ satisfies~(\ref{s64hbfkapenk});

\item the discounted stock price and defaultable bond price processes
$e^{-rt}S(t)$ and $e^{-rt}D(t,T)$ are $(\mathcal{G}_{t})_{t\in\lbrack0,T]}%
$-martingales under~$Q$.
\end{itemize}

\begin{proposition}
\label{Prop:nf64tajsh}$Q=Q_{BS}$ on $\mathcal{F}_{T}$.
\end{proposition}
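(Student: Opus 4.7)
My plan is to argue that $Q$ and $Q_{BS}$ agree on the $\pi$-system of cylindrical subsets of $\mathcal{H}_T=\sigma(W_t,t\in[0,T])$, extend this agreement to all of $\mathcal{H}_T$ by the Dynkin $\pi$-$\lambda$ theorem, and finally handle the passage from $\mathcal{H}_T$ to its $P$-completion $\mathcal{F}_T$ by showing that $Q$ annihilates $\mathcal{N}_T$.

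For the first step, fix a cylindrical set $A_{z,Z}\subset\dot{\mathbb{R}}^{[0,T]}$ and observe that
\[
\{W\in A_{z,Z}\}=\{(W,\tau)\in A_{z,Z}\times(0,\infty)\}.
\]
I would partition $A_{z,Z}\times(0,\infty)$ as a countable disjoint union of sets in $\tilde{\mathcal{A}}$, for instance using $s_n:=T\cdot 2^{-n}$ and writing
\[
A_{z,Z}\times(0,\infty)=\bigcup_{n=0}^{\infty}\bigl(A_{z,Z}\times(s_{n+1},s_n]\bigr)\ \cup\ \bigl(A_{z,Z}\times(T,\infty)\bigr).
\]
Applying the countable additivity of $Q$ from Proposition~\ref{Cor:nfrt6434ad} together with the defining formulas (\ref{p0w75nslnd21})--(\ref{p0w75nslnd22}), I then interchange the sum and $\mathbb{E}_{Q_{BS}}$ by monotone convergence (the summands $G(s_{n+1})-G(s_n)$ are non-negative because $G$ is non-increasing) and telescope: $\sum_{n=0}^{N}(G(s_{n+1})-G(s_n))=G(s_{N+1})-G(T)$, which tends to $G(0)-G(T)=1-G(T)$ by the continuous path property of $G$ and the normalisation $G(0)=1$ provided by Theorem~\ref{Thm:mvja83fab}. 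Adding the tail term $\mathbb{E}_{Q_{BS}}(\mathbf{1}_{\{W\in A_{z,Z}\}}G(T))$ collapses everything to $\mathbb{E}_{Q_{BS}}(\mathbf{1}_{\{W\in A_{z,Z}\}})=Q_{BS}(W\in A_{z,Z})$.

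The cylindrical sets of the above form constitute a $\pi$-system generating $\mathcal{H}_T$; the class $\{A\in\mathcal{H}_T:Q(A)=Q_{BS}(A)\}$ is a Dynkin (\mbox{$\lambda$-})system because both $Q$ and $Q_{BS}$ are probability measures on $\mathcal{G}_T\supset\mathcal{H}_T$, so by the $\pi$-$\lambda$ theorem $Q=Q_{BS}$ throughout $\mathcal{H}_T$. For the extension to $\mathcal{F}_T=\sigma(\mathcal{H}_T\cup\mathcal{N}_T)$, I take any $N\in\mathcal{N}_T$, pick $B\in\mathcal{H}_T$ with $N\subset B$ and $P(B)=0$; then $Q_{BS}(B)=0$ by equivalence, and by the first step $Q(B)=Q_{BS}(B)=0$, so monotonicity of $Q$ on $\mathcal{G}_T$ forces $Q(N)=0$. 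Any $A\in\mathcal{F}_T$ can be written as $A=H\vartriangle N_0$ for some $H\in\mathcal{H}_T$ and $N_0\in\mathcal{N}_T$, and both $Q$ and $Q_{BS}$ then assign to $A$ the same value as to~$H$.

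The main obstacle is the countable-telescoping computation at the heart of step one: verifying that the definition of $\tilde{Q}$ on products $A_{z,Z}\times(s,t]$ really sums up (over a decomposition of $(0,\infty)$) to the $Q_{BS}$-mass of $\{W\in A_{z,Z}\}$. This hinges on left-continuity of $G$ at~$0$ together with $G(0)=1$, both supplied by Theorem~\ref{Thm:mvja83fab}, plus a monotone convergence justification for the sum/expectation exchange. Everything else is bookkeeping with the Dynkin theorem and the completion.
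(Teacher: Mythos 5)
Your proof is correct and follows essentially the same route as the paper's: agreement of $Q$ and $Q_{BS}$ on the $\pi$-system of cylindrical sets, the standard uniqueness theorem for measures agreeing on a generating $\pi$-system, and annihilation of the null sets in $\mathcal{N}_T$. The one place you make extra work for yourself is the dyadic telescoping step, which is unnecessary: $A_{z,Z}\times(0,T]$ is already a single set of the form $A_{z,Z}\times(s,t]$ belonging to $\tilde{\mathcal{A}}$, so $\tilde{Q}(A_{z,Z}\times(0,T])=\mathbb{E}_{Q_{BS}}(\mathbf{1}_{\{W\in A_{z,Z}\}}(G(0)-G(T)))$ holds directly by the defining formula, and $G(0)=1$ finishes the computation in one line without any countable additivity, monotone convergence, or path-continuity of $G$ at $0$.
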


\begin{proof}
Because the family of sets of the form $\left\{  W\in A_{z,Z}\right\}  $,
where $A_{z,Z}$ is a cylindrical set in $\mathbb{R}^{[0,T]}$, is closed under
finite intersections and generates the $\sigma$-algebra~$\mathcal{H}_{T}$, it
suffices to show (see Lemma~1.9.4 in~\cite{Bog2007}) that%
\[
Q(W\in A_{z,Z})=Q_{BS}(W\in A_{z,Z})
\]
for any such set to prove that $Q=Q_{BS}$ on $\mathcal{H}_{T}$. Indeed, this
equality holds since%
\begin{align*}
Q(W\in A_{z,Z})  &  =Q((W,\tau)\in A_{z,Z}\times(0,\infty))\\
&  =Q((W,\tau)\in A_{z,Z}\times(0,T])+Q((W,\tau)\in A_{z,Z}\times(T,\infty))\\
&  =\tilde{Q}(A_{z,Z}\times(0,T])+\tilde{Q}(A_{z,Z}\times(T,\infty))\\
&  =\mathbb{E}_{Q_{BS}}(\mathbf{1}_{\{W\in A_{z,Z}\}}(G(0)-G(T)))+\mathbb{E}%
_{Q_{BS}}(\mathbf{1}_{\{W\in A_{z,Z}\}}G(T))\\
&  =\mathbb{E}_{Q_{BS}}(\mathbf{1}_{\{W\in A_{z,Z}\}})=Q_{BS}(W\in A_{z,Z}),
\end{align*}
where we have used the fact that $G(0)=1$. Augmenting by the null sets from
$\mathcal{N}_{T}$ preserves the equality $Q=Q_{BS}$, which therefore also
holds on $\mathcal{F}_{T}=\sigma(\mathcal{H}_{T}\cup\mathcal{N}_{T})$.
\end{proof}

\begin{proposition}
\label{Prop:ma9f65nak}For each $t\in\lbrack0,T]$,%
\[
G(t)=Q(t<\tau|\mathcal{F}_{T}).
\]

\end{proposition}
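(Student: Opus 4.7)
The plan is to verify the defining property of the conditional expectation directly: since $G(t)$ is $(\mathcal{F}_t)_{t\in[0,T]}$-adapted, hence $\mathcal{F}_T$-measurable, it suffices to show that
\[
Q(A\cap\{t<\tau\})=\mathbb{E}_{Q}(\mathbf{1}_{A}G(t))
\]
for every $A\in\mathcal{F}_{T}$. By Proposition~\ref{Prop:nf64tajsh} we have $Q=Q_{BS}$ on $\mathcal{F}_{T}$, so the right-hand side may equivalently be written as $\mathbb{E}_{Q_{BS}}(\mathbf{1}_{A}G(t))$, and this is what I will match on the left.

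The second step is to decompose $\{t<\tau\}=\{t<\tau\leq T\}\cup\{T<\tau\}$ and apply the construction formulas \eqref{p0w75nslnd21} and \eqref{p0w75nslnd22} together with the definition \eqref{ns74ma0r4nk} of $Q$ via $\tilde{Q}$. Assuming the extended identity
\[
Q(A\cap\{s<\tau\leq u\})=\mathbb{E}_{Q_{BS}}(\mathbf{1}_{A}(G(s)-G(u))),\qquad Q(A\cap\{T<\tau\})=\mathbb{E}_{Q_{BS}}(\mathbf{1}_{A}G(T))
\]
is available for every $A\in\mathcal{F}_{T}$, taking $s=t$ and $u=T$ and adding gives
\[
Q(A\cap\{t<\tau\})=\mathbb{E}_{Q_{BS}}(\mathbf{1}_{A}(G(t)-G(T)))+\mathbb{E}_{Q_{BS}}(\mathbf{1}_{A}G(T))=\mathbb{E}_{Q_{BS}}(\mathbf{1}_{A}G(t)),
\]
which is exactly what is needed. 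The degenerate cases $t=0$ (where $G(0)=1$ and $Q(\tau>0)=Q(\Omega)=1$) and $t=T$ (where only the second formula is required) are handled by obvious simplifications of the same argument.

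The only nontrivial point, and the step I expect to require the most care, is the extension of the two construction formulas from the generating family of cylindrical sets $\{W\in A_{z,Z}\}$ to all of $\mathcal{F}_{T}$. The argument is a standard uniqueness-of-measures step: for each fixed $s<u$ in $[0,T]$, both $A\mapsto Q(A\cap\{s<\tau\leq u\})$ and $A\mapsto\mathbb{E}_{Q_{BS}}(\mathbf{1}_{A}(G(s)-G(u)))$ are finite measures on $(\Omega,\mathcal{H}_{T})$, and they agree on the $\pi$-system of cylindrical sets by \eqref{p0w75nslnd21} together with \eqref{ns74ma0r4nk}; since this $\pi$-system generates $\mathcal{H}_{T}$, Dynkin's lemma forces agreement on all of $\mathcal{H}_{T}$. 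The analogous argument based on \eqref{p0w75nslnd22} handles the $\{T<\tau\}$ identity. Finally, the augmentation of $Q$ through the null sets $\mathcal{N}_{T}$ (used to pass from $\sigma(\mathcal{A})$ to $\mathcal{G}_{T}$) preserves both identities, so they hold on $\mathcal{F}_{T}=\sigma(\mathcal{H}_{T}\cup\mathcal{N}_{T})$, completing the proof.
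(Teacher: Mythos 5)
Your proof is correct and follows essentially the same route as the paper: reduce to checking $Q(A\cap\{t<\tau\})=\mathbb{E}_{Q_{BS}}(\mathbf{1}_{A}G(t))$ via Proposition~\ref{Prop:nf64tajsh}, verify it on the $\pi$-system of cylindrical sets using the construction of $\tilde{Q}$ and the decomposition $(t,T]\cup(T,\infty)$, and extend to $\mathcal{H}_{T}$ by the uniqueness-of-measures lemma and to $\mathcal{F}_{T}$ by the null-set augmentation. The only cosmetic difference is that the paper applies the generating-class argument once to the combined identity, whereas you apply it separately to the two pieces and then add.
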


\begin{proof}
We need to show that, for each $A\in\mathcal{F}_{T}$ and $t\in\lbrack0,T]$,%
\[
\mathbb{E}_{Q}(1_{A}G(t))=Q(A\cap\{t<\tau\}).
\]
In fact, it suffices to show this equality for any $A\in\mathcal{H}_{T}$.
Because the family of sets of the form $\left\{  W\in A_{z,Z}\right\}  $,
where $A_{z,Z}$ is a cylindrical set in $\mathbb{R}^{[0,T]}$, is closed under
finite intersections and generates the $\sigma$-algebra~$\mathcal{H}_{T}$, it
suffices to show (see Lemma~1.9.4 in~\cite{Bog2007}) that%
\[
\mathbb{E}_{Q}(\mathbf{1}_{\{W\in A_{z,Z}\}}G(t))=Q(\{W\in A_{z,Z}%
\}\cap\{t<\tau\})
\]
for any such set. Indeed, since $\mathbf{1}_{\{W\in A_{z,Z}\}}G(t)$ is an
$\mathcal{F}_{T}$-measurable random variable and, by
Proposition~\ref{Prop:nf64tajsh}, $Q=Q_{BS}$ on~$\mathcal{F}_{T}$, it follows
that%
\begin{align*}
&  \mathbb{E}_{Q}(\mathbf{1}_{\{W\in A_{z,Z}\}}G(t))=\mathbb{E}_{Q_{BS}%
}(\mathbf{1}_{\{W\in A_{z,Z}\}}G(t))\\
&  \quad\quad=\mathbb{E}_{Q_{BS}}(\mathbf{1}_{\{W\in A_{z,Z}\}}%
(G(t)-G(T)))+\mathbb{E}_{Q_{BS}}(\mathbf{1}_{\{W\in A_{z,Z}\}}G(T))\\
&  \quad\quad=\tilde{Q}(A_{z,Z}\times(t,T])+\tilde{Q}(A_{z,Z}\times
(T,\infty))\\
&  \quad\quad=\tilde{Q}(A_{z,Z}\times(t,\infty))=Q(\{(W,\tau)\in A_{z,Z}%
\times(t,\infty)\})\\
&  \quad\quad=Q(\{W\in A_{z,Z}\}\cap\{t<\tau\}),
\end{align*}
as required.
\end{proof}

\begin{proposition}
Both the discounted stock price process $e^{-rt}S(t)$ and discounted
defaultable bond price process $e^{-rt}D(t,T)$ are $(\mathcal{G}_{t}%
)_{t\in\lbrack0,T]}$-martingales under~$Q$.
\end{proposition}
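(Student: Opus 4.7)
The plan is to reduce both $(\mathcal{G}_t)$-martingale identities under $Q$ to $(\mathcal{F}_t)$-martingale identities under $Q_{BS}$, using the two pillars already in hand: $Q=Q_{BS}$ on $\mathcal{F}_T$ (Proposition~\ref{Prop:nf64tajsh}) and $G(t)=Q(t<\tau|\mathcal{F}_T)$ (Proposition~\ref{Prop:ma9f65nak}). On the $Q_{BS}$ side, the martingale property of $e^{-rt}S(t)$ is built into the Black--Scholes setup, and the martingale property of $e^{-rt}c(t)G(t)$ is precisely Theorem~\ref{Thm:mvja83fab}. Everything else is bookkeeping.

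For the defaultable bond I would argue directly. Fix $s<t$ in $[0,T]$ and $A\in\mathcal{G}_s$. By Lemma~5.27 of \cite{CapZas2016} (the same splitting lemma already used in Theorem~\ref{Thm:d64gsna90ms}) there is an $A'\in\mathcal{F}_s$ with $A\cap\{s<\tau\}=A'\cap\{s<\tau\}$, and since $\{t<\tau\}\subset\{s<\tau\}$ this gives $\mathbf{1}_A\mathbf{1}_{\{t<\tau\}}=\mathbf{1}_{A'}\mathbf{1}_{\{t<\tau\}}$ and the analogous identity at time~$s$. Conditioning on $\mathcal{F}_T$ and applying Propositions~\ref{Prop:nf64tajsh} and~\ref{Prop:ma9f65nak} turns $\mathbb{E}_Q(\mathbf{1}_A e^{-rt}c(t)\mathbf{1}_{\{t<\tau\}})$ into $\mathbb{E}_{Q_{BS}}(\mathbf{1}_{A'}e^{-rt}c(t)G(t))$. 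The $(\mathcal{F}_t)$-martingale property of $e^{-rt}c(t)G(t)$ under $Q_{BS}$ then replaces $t$ by $s$, and running the same chain of reductions in reverse produces $\mathbb{E}_Q(\mathbf{1}_A e^{-rs}c(s)\mathbf{1}_{\{s<\tau\}})$, which is the required identity for $e^{-rt}D(t,T)$.

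For the stock, since $e^{-rt}S(t)$ is $\mathcal{F}_t$-adapted and $Q$ agrees with $Q_{BS}$ on $\mathcal{F}_T$, it is automatically an $(\mathcal{F}_t)$-martingale under $Q$. To promote this to a $(\mathcal{G}_t)$-martingale statement I would verify the immersion property $\mathbb{E}_Q(\mathbf{1}_A|\mathcal{F}_T)=\mathbb{E}_Q(\mathbf{1}_A|\mathcal{F}_s)$ for every $A\in\mathcal{G}_s$. A $\pi$-$\lambda$ argument reduces this to the generating $\pi$-system of sets $F\cap\{\tau\leq u\}$ and $F\cap\{s<\tau\}$ with $F\in\mathcal{F}_s$ and $u\in[0,s]$; for these Proposition~\ref{Prop:ma9f65nak} yields $Q$-conditional probabilities $\mathbf{1}_F(1-G(u))$ and $\mathbf{1}_F G(s)$ given $\mathcal{F}_T$, both $\mathcal{F}_s$-measurable. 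With immersion in hand, a standard two-line tower-property computation transfers the $(\mathcal{F}_t)$-martingale property of $e^{-rt}S(t)$ under $Q$ to~$(\mathcal{G}_t)$.

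The only genuine obstacle is conceptual: one must recognise that the pathspace construction of $Q$ via the cylindrical formulas (\ref{p0w75nslnd21})--(\ref{p0w75nslnd22}), in which the conditional distribution of $\tau$ given the Brownian path factors entirely through the $(\mathcal{F}_t)$-previsible process $G$, automatically encodes the immersion property, so no new work beyond Propositions~\ref{Prop:nf64tajsh} and~\ref{Prop:ma9f65nak} is needed. Integrability is immediate in both cases, since $c(t)$, $\mathbf{1}_{\{t<\tau\}}$ and $G(t)$ are bounded and $e^{-rt}S(t)$ is $Q_{BS}$-integrable.
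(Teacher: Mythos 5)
Your proposal is correct and rests on exactly the same two pillars as the paper's proof ($Q=Q_{BS}$ on $\mathcal{F}_T$ and $G(t)=Q(t<\tau\,|\,\mathcal{F}_T)$, combined with the $Q_{BS}$-martingale properties of $e^{-rt}S(t)$ and $e^{-rt}c(t)G(t)$); the only difference is organisational, in that the paper verifies $\mathbb{E}_Q(\mathbf{1}_{A\cap\{s<\tau\}}X(t))=\mathbb{E}_Q(\mathbf{1}_{A\cap\{s<\tau\}}X(T))$ on the generating $\pi$-system of $\mathcal{G}_t$ and extends by the uniqueness lemma, whereas you handle the bond via the splitting lemma on an arbitrary $\mathcal{G}_s$-set and the stock by first isolating the immersion property. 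Both routes are sound, and your explicit identification of the $(H)$-hypothesis as a by-product of the construction is a correct observation that the paper leaves implicit.
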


\begin{proof}
Without loss of generality, we can assume that $r=0$. Because the processes
$S(t)$ and $c(t)G(t)$ are $(\mathcal{F}_{t})_{t\in\lbrack0,T]}$-martingales
under~$Q_{BS}$ and $Q_{BS}=Q$ on~$\mathcal{F}_{T}$, they are also
$(\mathcal{F}_{t})_{t\in\lbrack0,T]}$-martingales under~$Q$. Thus, by
Proposition~\ref{Prop:ma9f65nak}, for any $s,t\in\lbrack0,T]$ such that $s\leq
t$ and any $A\in\mathcal{F}_{t}$, we have%
\begin{align*}
\mathbb{E}_{Q}(\mathbf{1}_{A\cap\left\{  s<\tau\right\}  }S(t))  &
=\mathbb{E}_{Q}(\mathbf{1}_{A}\mathbf{1}_{\left\{  s<\tau\right\}
}S(t))=\mathbb{E}_{Q}(\mathbf{1}_{A}\mathbb{E}_{Q}(\mathbf{1}_{\left\{
s<\tau\right\}  }|\mathcal{F}_{T})S(t))\\
&  =\mathbb{E}_{Q}(\mathbf{1}_{A}G(s)S(t))=\mathbb{E}_{Q}(\mathbf{1}%
_{A}G(s)\mathbb{E}_{Q}(S(T)|\mathcal{F}_{t}))\\
&  =\mathbb{E}_{Q}(\mathbf{1}_{A}G(s)S(T))=\mathbb{E}_{Q}(\mathbf{1}%
_{A}\mathbb{E}_{Q}(\mathbf{1}_{\left\{  s<\tau\right\}  }|\mathcal{F}%
_{T})S(T))\\
&  =\mathbb{E}_{Q}(\mathbf{1}_{A\cap\left\{  s<\tau\right\}  }S(T))
\end{align*}
and, since $D(t,T)=c(t)\mathbf{1}_{\left\{  t<\tau\right\}  }$ and $c(T)=1$,
we also have%
\begin{align*}
\mathbb{E}_{Q}(\mathbf{1}_{A\cap\left\{  s<\tau\right\}  }D(t,T))  &
=\mathbb{E}_{Q}(\mathbf{1}_{A}c(t)\mathbf{1}_{\left\{  t<\tau\right\}
})=\mathbb{E}_{Q}(\mathbf{1}_{A}c(t)\mathbb{E}_{Q}(\mathbf{1}_{\left\{
t<\tau\right\}  }|\mathcal{F}_{T}))\\
&  =\mathbb{E}_{Q}(\mathbf{1}_{A}c(t)G(t))=\mathbb{E}_{Q}(\mathbf{1}%
_{A}\mathbb{E}_{Q}(c(T)G(T)|\mathcal{F}_{t}))\\
&  =\mathbb{E}_{Q}(\mathbf{1}_{A}c(T)G(T))=\mathbb{E}_{Q}(\mathbf{1}%
_{A}G(T))=Q(A\cap\left\{  T<\tau\right\}  )\\
&  =\mathbb{E}_{Q}(\mathbf{1}_{A\cap\left\{  s<\tau\right\}  }\mathbf{1}%
_{\left\{  T<\tau\right\}  })=\mathbb{E}_{Q}(\mathbf{1}_{A\cap\left\{
s<\tau\right\}  }D(T,T)).
\end{align*}
Because the class of sets $A\cap\left\{  s<\tau\right\}  $, where
$A\in\mathcal{F}_{t}$ and $s\in\lbrack0,t]$, is closed under finite
intersections and generates the $\sigma$-algebra~$\mathcal{G}_{t}$, it follows
(see Lemma~1.9.4 in~\cite{Bog2007}) that
\[
\mathbb{E}_{Q}(\mathbf{1}_{C}S(t))=\mathbb{E}_{Q}(\mathbf{1}_{C}%
S(T))\quad\text{and}\quad\mathbb{E}_{Q}(\mathbf{1}_{C}D(t,T))=\mathbb{E}%
_{Q}(\mathbf{1}_{C}D(T,T))
\]
for every $C\in\mathcal{G}_{t}$. We can conclude that%
\[
S(t)=\mathbb{E}_{Q}(S(T)|\mathcal{G}_{t})\quad\text{and}\quad
D(t,T)=\mathbb{E}_{Q}(D(T,T)|\mathcal{G}_{t}),
\]
as required.
\end{proof}

\section{Example}

One simple example is the hazard process model with constant hazard rate
$\lambda>0$. In this case the default time~$\tau$ is exponentially distributed
under~$Q$ with parameter~$\lambda$, the survival process is given by
$G(t)=e^{-\lambda t}$, and the defaultable bond price is $D(t,T)=\mathbf{1}%
_{\{t<\tau\}}e^{-(r+\lambda)\left(  T-t\right)  }$.

We modify this simple example by stipulating two constants $\lambda
_{+},\lambda_{-}>0$, and taking the hazard rate to be the process $\lambda(t)$
equal to $\lambda_{+}$ whenever $W(t)\geq0$ and $\lambda_{-}$ otherwise. This
gives%
\[
G(t)=e^{-\int_{0}^{t}\lambda(s)ds}=e^{-\lambda_{+}\gamma_{+}(t)-\lambda
_{-}\gamma_{-}(t)}=e^{-\left(  \lambda_{+}-\lambda_{-}\right)  \gamma
_{+}(t)-\lambda_{-}t},
\]
where%
\[
\gamma_{+}(t):=\int_{0}^{t}\mathbf{1}_{[0,\infty)}(W(s))ds,\quad\gamma
_{-}(t):=\int_{0}^{t}\mathbf{1}_{(-\infty,0]}(W(s))ds
\]
are the sojourn times of$~W(t)$ above and below~$0$, which satisfy $\gamma
_{+}(t)+\gamma_{-}(t)=t$. Since $W(t)$ is the Brownian motion driving the
stock price process $S(t)$, this means that the hazard rate depends on whether
$S(t)$ is above or below a certain level.

The survival process $G(t)$ gives rise to a martingale measure~$Q$ as in the
construction in Section~\ref{Sect:hf64bs9a}. The probability distribution of
the default time~$\tau$ under~$Q$ can be found by computing the expectation%
\[
Q(t<\tau)=Q_{BS}(G(t))=e^{-\lambda_{-}t}\mathbb{E}_{Q_{BS}}(e^{-\left(
\lambda_{+}-\lambda_{-}\right)  \gamma_{+}(t)}).
\]
A formula for the Laplace transform of the probability distribution of the
sojourn time $\gamma_{+}(t)$ can be found, for example, in~\cite{BorSal2002},
part~II, formula~1.1.4.3. It gives%
\[
Q(t<\tau)=e^{-\frac{\left(  \lambda_{+}+\lambda_{-}\right)  t}{2}}I_{0}\left(
\frac{\left(  \lambda_{+}-\lambda_{-}\right)  t}{2}\right)  ,
\]
where
\[
I_{0}(x)=\frac{1}{\pi}\int_{0}^{\pi}e^{x\cos\theta}d\theta
\]
is the modified Bessel function of the first kind. Hence the price of the
defaultable bond at time~$0$ is%
\[
D(0,T)=e^{-rT}Q(T<\tau)=e^{-rT}e^{-\frac{\left(  \lambda_{+}+\lambda
_{-}\right)  T}{2}}I_{0}\left(  \frac{\left(  \lambda_{+}-\lambda_{-}\right)
T}{2}\right)  .
\]%
\begin{figure}
[t]
\begin{center}
\includegraphics[
height=2.6351in,
width=4.8205in
]%
{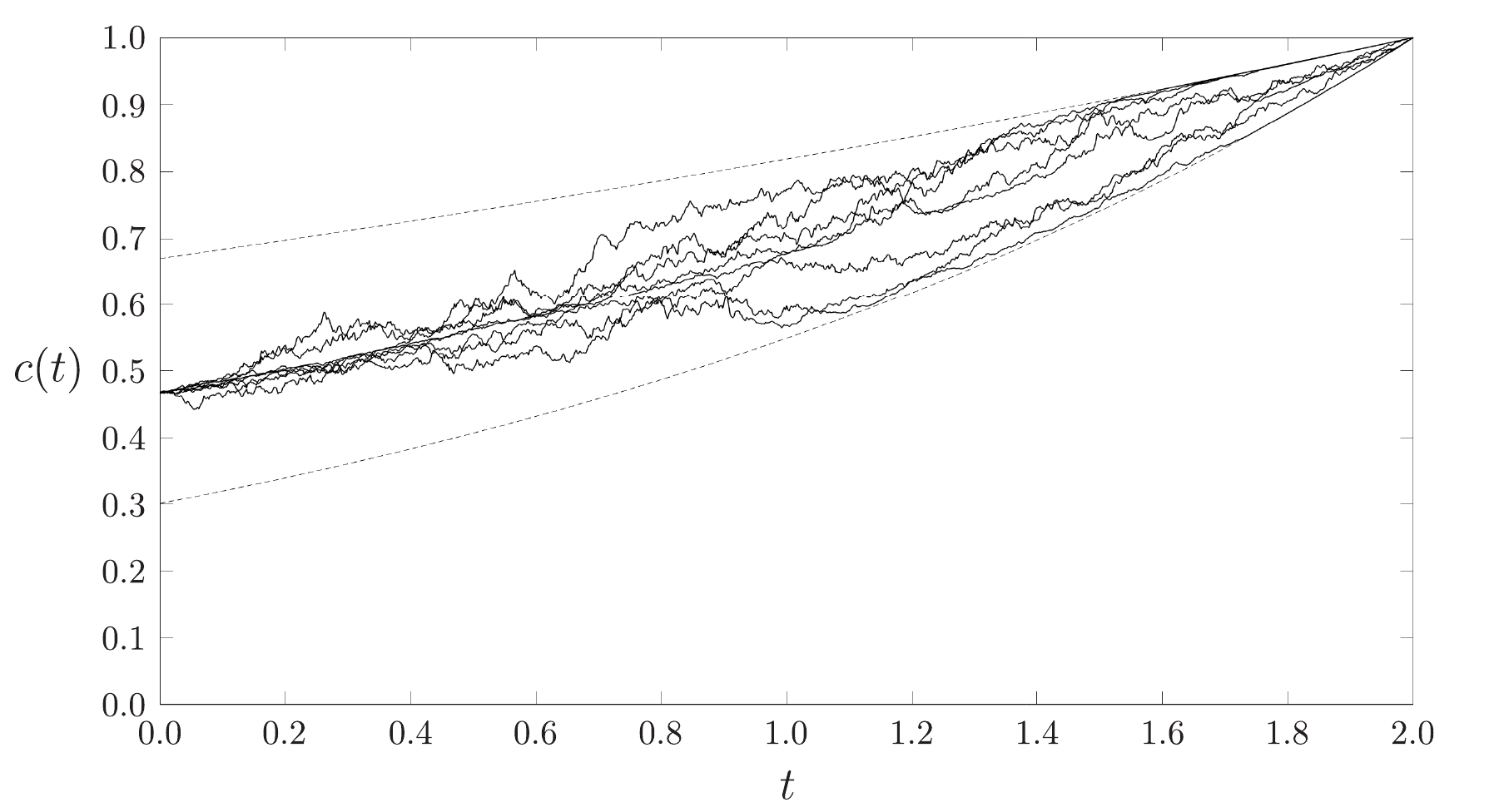}%
\caption{Sample paths of pre-default value $c(t)$ of $D(t,T)$}%
\label{Fig:gd6sgfsf}%
\end{center}
\end{figure}

It is also interesting to compute the pre-default value~$c(t)$, hence the
defaultable bond price $D(t,T)=\mathbf{1}_{\{t<\tau\}}c(t)$ for any
$t\in\lbrack0,T]$. We have%
\begin{align*}
c(t)  &  =e^{-r\left(  T-t\right)  }G(t)^{-1}\mathbb{E}_{Q_{BS}}%
(G(T)|\mathcal{F}_{t})\\
&  =e^{-\left(  r+\lambda_{-}\right)  \left(  T-t\right)  }\mathbb{E}_{Q_{BS}%
}(e^{-\left(  \lambda_{+}-\lambda_{-}\right)  \left(  \gamma_{+}(T)-\gamma
_{+}(t)\right)  }|\mathcal{F}_{t}).
\end{align*}
Observe that%
\[
\gamma_{+}(T)-\gamma_{+}(t)=\int_{t}^{T}\mathbf{1}_{\{W(s)\geq0\}}ds=\int
_{t}^{T}\mathbf{1}_{\{W(s)-W(t)\geq-W(t)\}}ds
\]
can be regarded as the sojourn time above $-W(t)$ of the Brownian motion
$W(s)-W(t)$ starting at time~$t$. Hence formula~1.1.4.3 in part~II
of~\cite{BorSal2002} for the Laplace transform of the probability distribution
of the sojourn time above a given level makes it possible to compute the above
conditional expectation, and gives%
\[
c(t)=e^{-\left(  r+\lambda_{+}\right)  \left(  T-t\right)  }\left[
\operatorname{erf}\left(  \frac{\left\vert W(t)\right\vert }{\sqrt{2\left(
T-t\right)  }}\right)  +\frac{1}{\pi}\int_{t}^{T}\frac{e^{\left(  \lambda
_{+}-\lambda_{-}\right)  \left(  T-s\right)  }e^{-\frac{W(t)^{2}}{2\left(
s-t\right)  }}}{\sqrt{(T-s)\left(  s-t\right)  }}ds\right]
\]
if $W(t)\geq0$, and%
\[
c(t)=e^{-\left(  r+\lambda_{-}\right)  \left(  T-t\right)  }\left[
\operatorname{erf}\left(  \frac{\left\vert W(t)\right\vert }{\sqrt{2(T-t)}%
}\right)  +\frac{1}{\pi}\int_{t}^{T}\frac{e^{\left(  \lambda_{-}-\lambda
_{+}\right)  \left(  T-s\right)  }e^{-\frac{W(t)^{2}}{2\left(  s-t\right)  }}%
}{\sqrt{(T-s)\left(  s-t\right)  }}ds\right]
\]
if $W(t)\leq0$, where%
\[
\operatorname{erf}(x)=\frac{1}{\sqrt{\pi}}\int_{-x}^{x}e^{-t^{2}}dt
\]
is the error function.

Several sample paths of the pre-default value $c(t)$ of the defaultable bond
$D(t,T)$ are shown in Figure~\ref{Fig:gd6sgfsf} for $T=2$, $r=0.1$, and
$\lambda_{+}=0.5$, $\lambda_{-}=0.1$. The broken lines marking the envelope of
the set of sample paths are the graphs of $e^{-\left(  r+\lambda_{+}\right)
\left(  T-t\right)  }$ and $e^{-\left(  r+\lambda_{-}\right)  \left(
T-t\right)  }$. In particular, we arrive at the price $D(0,T)=c(0)=0.4675$ for
the defaultable bond at time~$0$.

\section{Appendix: The form of $D(t,T)$\label{Sect:str5eydvshs}}

Here we show that expression (\ref{ahewtfd54bd7}) for the defaultable bond can
be obtained from some weaker assumptions about $D(t,T)$ and the lack of
arbitrage in a class of simple strategies in the $BD$ section of the market
only. This is similar to the hazard function model considered
in~\cite{CapZas2014}.

\begin{definition}
\label{Def:8fjd64hsan}\upshape By a $BD$\emph{-simple self-financing strategy}
we understand an $\mathbb{R}^{2}$-valued $(\mathcal{G}_{t})_{t\in\lbrack0,T]}%
$-adapted process $\psi=\left(  \psi^{B},\psi^{D}\right)  $ representing
positions in $B$ and~$D$ such that there are sequences of times $0=s_{0}%
<s_{1}<\cdots<s_{N}=T$ and random variables $x_{1},\ldots,x_{N}$ and
$y_{1},\ldots,y_{N}$ with the following properties:

\begin{enumerate}
\item $x_{n}$ and $y_{n}$ are $\mathcal{G}_{s_{n-1}}$-measurable and%
\[
\psi^{B}(t)=x_{n},\quad\psi^{D}(t)=y_{n}%
\]
for each $n=1,\ldots,N$ and $t\in(s_{n-1},s_{n}]$;

\item The \emph{value process}
\[
V_{\psi}(t):=\psi^{B}(t)B(t,T)+\psi^{D}(t)D(t,T)
\]
satisfies the following \emph{self-financing condition} for each
$n=0,\ldots,N-1$:%
\[
V_{\psi}(s_{n})=\lim_{t\searrow s_{n}}V_{\psi}(t).
\]

\end{enumerate}
\end{definition}

\begin{definition}
\label{Def:nv757ak0}\upshape We say that the \emph{no-}$BD$%
-\emph{simple-arbitrage} (NBDSA) \emph{principle} holds if there is no
$BD$-simple self-financing strategy~$\psi=\left(  \psi^{B},\psi^{D}\right)  $
such that $V_{\psi}(0)=0$, $V_{\psi}(T)\geq0$, and $V_{\psi}(T)>0$ with
positive probability~$P$.
\end{definition}

\begin{proposition}
\label{Prop:fg46std7sg}Suppose that $D(t,T)$ is a $\left(  \mathcal{G}%
_{t}\right)  _{t\in\lbrack0,T]}$-adapted process satisfying $D(T,T)=\mathbf{1}%
_{\left\{  T<\tau\right\}  }$, with paths which are continuous on
$[0,\tau)\cap\lbrack0,T]$ and right-continuous elsewhere. If the
{\upshape NBDSA} principle holds, then there is an $\left(  \mathcal{F}%
_{t}\right)  _{t\in\lbrack0,T]}$-adapted process $c(t)$ defined for all
$t\in\lbrack0,T]$, with continuous paths, such that $c(t)\in(0,1)$ for all
$t\in\lbrack0,T)$, $c(T)=1$ and%
\begin{equation}
D(t,T)=c(t)\mathbf{1}_{\left\{  t<\tau\right\}  } \label{dhd645avs8}%
\end{equation}
for all $t\in\lbrack0,T]$.
\end{proposition}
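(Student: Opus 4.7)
The plan is to extract the pre-default value~$c(t)$ from the $\mathcal{G}_t$-adapted process $D(t,T)$ in four stages, combining NBDSA for the structural and range properties with the enlarged-filtration machinery (via Lemma~5.27 of~\cite{CapZas2016}) for $\mathcal{F}_t$-adaptedness. First, I would apply NBDSA on the three-point grid $0<t<T$: if $D(t,T)>0$ on a subset of $\{\tau\leq t\}$ of positive probability, then shorting one~$D$ at~$t$ and investing $D(t,T)$ in~$B$ until~$T$ yields the strictly positive terminal value $D(t,T)e^{r(T-t)}-D(T,T)=D(t,T)e^{r(T-t)}$ (using $D(T,T)=\mathbf{1}_{\{T<\tau\}}=0$ on $\{\tau\leq T\}$), contradicting NBDSA; the opposite sign is ruled out symmetrically. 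Thus $D(t,T)=D(t,T)\mathbf{1}_{\{t<\tau\}}$, and Lemma~5.27 of~\cite{CapZas2016} supplies for each~$t$ an $\mathcal{F}_t$-measurable $\tilde{c}(t)$ with $D(t,T)=\tilde{c}(t)\mathbf{1}_{\{t<\tau\}}$, which is unique up to $P$-null sets (two candidates would differ on an $\mathcal{F}_T$-set disjoint from $\{t<\tau\}$, forced to be null by~(\ref{d7465edgd})).

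Next, I would upgrade the family $\{\tilde{c}(t)\}_{t\in[0,T]}$ to a continuous $\mathcal{F}_t$-adapted process~$c(t)$. The key observation is that on $\{T<\tau\}$---which has positive $\mathcal{F}_T$-conditional probability by the footnote to~(\ref{d7465edgd})---the map $t\mapsto\tilde{c}(t)$ coincides with $t\mapsto D(t,T)$ and is continuous on $[0,T]$ by hypothesis. Because $\tilde{c}(t)$ depends only on the driving Brownian path up to~$t$, this continuity transfers to all of~$\Omega$: one fixes a regular conditional distribution of~$\tau$ given~$\mathcal{F}_T$ and defines $c(t,\omega)$ to equal $D(t,T,\omega')$ for any auxiliary point~$\omega'$ sharing the $\mathcal{F}_T$-trajectory of~$\omega$ and satisfying $\tau(\omega')>T$; the uniqueness of~$\tilde{c}(t)$ makes this choice independent (mod~$P$) of~$\omega'$, and the resulting $c(t)$ is a continuous $\mathcal{F}_t$-adapted modification.

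Finally, two more NBDSA arguments fix the range. If $c(t)\geq e^{-r(T-t)}$ on an $\mathcal{F}_t$-set~$A$ of positive probability, then shorting one~$D$ and buying $c(t)e^{r(T-t)}$ units of~$B$ at~$t$ on $A\cap\{t<\tau\}$ produces the non-negative terminal value $c(t)e^{r(T-t)}-\mathbf{1}_{\{T<\tau\}}$, which is strictly positive on $A\cap\{t<\tau\leq T\}$, an event of positive probability by~(\ref{d7465edgd}); hence $c(t)<e^{-r(T-t)}\leq 1$ on $[0,T)$. The mirror-image strategy (long~$D$ funded by short~$B$) rules out $c(t)\leq 0$ by the same mechanism. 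The boundary condition $c(T)=1$ follows from $D(T,T)=\mathbf{1}_{\{T<\tau\}}$ combined with the uniqueness-via-(\ref{d7465edgd}) argument. I expect the main technical hurdle to be the continuity-and-adaptedness step: the pointwise $\mathcal{F}_t$-measurable versions $\tilde{c}(t)$ do not assemble into a continuous process automatically, and the regular-conditional-distribution construction above is precisely what is required to leverage the continuity of~$D$ on $\{T<\tau\}$ guaranteed by the hypothesis.
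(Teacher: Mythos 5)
Your overall architecture matches the paper's: an NBDSA strategy on the grid $0<t<T$ to force $D(t,T)\mathbf{1}_{\{\tau\le t\}}=0$, the enlarged-filtration lemma to produce, for each fixed $t$, an $\mathcal{F}_t$-measurable $\tilde{c}(t)$ with $D(t,T)=\tilde{c}(t)\mathbf{1}_{\{t<\tau\}}$ unique modulo null sets via~(\ref{d7465edgd}), and further NBDSA strategies to pin $c(t)\in(0,1)$. Those three steps are essentially the paper's argument and are sound.

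The gap is in the step you yourself identify as the hurdle: assembling the pointwise versions $\tilde{c}(t)$ into a single continuous adapted process. Your device --- fix a regular conditional distribution of $\tau$ given $\mathcal{F}_T$ and set $c(t,\omega):=D(t,T,\omega')$ for ``any'' $\omega'$ sharing the $\mathcal{F}_T$-trajectory of $\omega$ with $\tau(\omega')>T$ --- does not go through as stated. First, the existence, for every (or even almost every) $\omega$, of a point $\omega'$ in the $\mathcal{F}_T$-atom of $\omega$ with $\tau(\omega')>T$ does not follow directly from $P(T<\tau\,|\,\mathcal{F}_T)>0$ on an abstract space $(\Omega,\Sigma,P)$; and even granting it, you need a \emph{measurable selection} $\omega\mapsto\omega'(\omega)$, which is a nontrivial requirement you do not supply. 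Second, the ``uniqueness mod $P$'' of $\tilde{c}(t)$ only controls the choice for each fixed $t$ separately; to get path continuity you need one choice of $\omega'$ working simultaneously for all $t\in[0,T]$, and that is exactly what the selection problem obstructs. The paper avoids all of this with a purely countable argument: the event $C$ that $t\mapsto\tilde c(t)$ is uniformly continuous on the countable dense set $\mathbb{Q}_T=\{Tq:q\in\mathbb{Q}\cap[0,1]\}$ lies in $\mathcal{F}_T$ and contains $\{T<\tau\}$ (where $\tilde c=D(\cdot,T)$ is continuous), hence has full measure by~(\ref{d7465edgd}); one then extends $\tilde c|_{\mathbb{Q}_T}$ continuously to $[0,T]$, and adaptedness of the extension follows by approximating $t$ from the left within $\mathbb{Q}_T$ and using completeness of the filtration. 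If you replace your regular-conditional-distribution construction with this countable-dense-set argument, your proof is complete.
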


\begin{proof}
Because we can switch to working with discounted values, it is enough to
consider the case when $r=0$, so that $B(t,T)=1$ for all $t\in\lbrack0,T]$.

For $t=T$, we have $D(T,T)=c(T)1_{\{T<\tau\}}$ with $c(T)=1$. For any
$t\in\lbrack0,T)$, since $D(t,T)$ is a $\mathcal{G}_{t}$-measurable random
variable, it follows by well-known properties of the enlarged filtration (for
example, Proposition~5.28 in~\cite{CapZas2016}) that there exists an
$\mathcal{F}_{t}$-measurable random variable~$c(t)$ such that
\begin{equation}
D(t,T)\mathbf{1}_{\left\{  t<\tau\right\}  }=c(t)\mathbf{1}_{\left\{
t<\tau\right\}  }. \label{fhf648dbsh}%
\end{equation}
Let
\[
A_{t}:=\left\{  D(t,T)\mathbf{1}_{\left\{  \tau\leq t\right\}  }>0\right\}
,\quad A_{t}^{\prime}:=\left\{  D(t,T)\mathbf{1}_{\left\{  \tau\leq t\right\}
}<0\right\}  .
\]
We consider the $BD$-simple self-financing strategy%
\[%
\begin{array}
[c]{ll}%
\psi^{B}(u):=\psi^{D}(u):=0 & \text{for }u\in\lbrack0,t],\\
& \\
\psi^{B}(u):=D(t,T)\left(  \mathbf{1}_{A_{t}}-\mathbf{1}_{A_{t}^{\prime}%
}\right)  , & \\
\psi^{D}(u):=-\mathbf{1}_{A_{t}}+\mathbf{1}_{A_{t}^{\prime}} & \text{for }%
u\in(t,T].
\end{array}
\]
To verify that this is indeed a $BD$-simple self-financing strategy, when
$t\in(0,T)$, we take $N:=2$, $s_{0}:=0$, $s_{1}:=t$, $s_{2}:=T$ and $x_{1}%
:=0$, $y_{1}:=0$, $x_{2}:=D(t,T)\left(  \mathbf{1}_{A_{t}}-\mathbf{1}%
_{A_{t}^{\prime}}\right)  $, $y_{2}:=-\mathbf{1}_{A_{t}}+\mathbf{1}%
_{A_{t}^{\prime}}$ in Definition~\ref{Def:8fjd64hsan} to obtain this strategy.
When $t=0$, we take $N:=1$, $s_{0}:=0$, $s_{1}:=T$ and $x_{1}:=D(0,T)\left(
\mathbf{1}_{A_{0}}-\mathbf{1}_{A_{0}^{\prime}}\right)  $, $y_{1}%
:=-\mathbf{1}_{A_{0}}+\mathbf{1}_{A_{0}^{\prime}}$.

The initial value of this strategy is $V_{\psi}(0)=0$, and its final value
$V_{\psi}(T)=D(t,T)\left(  \mathbf{1}_{A_{t}}-\mathbf{1}_{A_{t}^{\prime}%
}\right)  $ is strictly positive on $A_{t}\cup A_{t}^{\prime}$ and $0$
otherwise. This would violate the {\upshape NBDSA} principle unless
$P(A_{t}\cup A_{t}^{\prime})=0$. Hence we can conclude that $D(t,T)\mathbf{1}%
_{\left\{  \tau\leq t\right\}  }=0$.\ Together with~(\ref{fhf648dbsh}), this
gives (\ref{dhd645avs8}).

We claim that $c(t)\in(0,1)$ for each $t\in\lbrack0,T)$. We put
\[
B_{t}:=\left\{  c(t)\leq0\right\}  ,\quad B_{t}^{\prime}:=\left\{
c(t)\geq1\right\}
\]
and observe that%
\[%
\begin{array}
[c]{ll}%
\varphi^{B}(u):=\varphi^{D}(u):=0 & \text{for }u\in\lbrack0,t],\\
& \\
\varphi^{B}(u):=c(t)\left(  \mathbf{1}_{B_{t}^{\prime}}-\mathbf{1}_{B_{t}%
}\right)  \mathbf{1}_{\left\{  t<\tau\right\}  }, & \\
\varphi^{D}(u):=-\left(  \mathbf{1}_{B_{t}^{\prime}}-\mathbf{1}_{B_{t}%
}\right)  \mathbf{1}_{\left\{  t<\tau\right\}  } & \text{for }u\in(t,T]
\end{array}
\]
is a $BD$-simple self-financing strategy with initial value $V_{\varphi}(0)=0$
and final value%
\begin{align*}
V_{\varphi}(T)  &  =c(t)\left(  \mathbf{1}_{B_{t}^{\prime}}-\mathbf{1}_{B_{t}%
}\right)  \mathbf{1}_{\left\{  t<\tau\right\}  }-\left(  \mathbf{1}%
_{B_{t}^{\prime}}-\mathbf{1}_{B_{t}}\right)  \mathbf{1}_{\left\{
T<\tau\right\}  }\\
&  =c(t)\left(  \mathbf{1}_{B_{t}^{\prime}}-\mathbf{1}_{B_{t}}\right)
\mathbf{1}_{\left\{  t<\tau\leq T\right\}  }+\left(  c(t)-1\right)  \left(
\mathbf{1}_{B_{t}^{\prime}}-\mathbf{1}_{B_{t}}\right)  \mathbf{1}_{\left\{
T<\tau\right\}  }\geq0.
\end{align*}
Given that the {\upshape NBDSA} principle holds, we must have $P(B_{t}%
\cap\left\{  T<\tau\right\}  )=0$ because $\left(  c(t)-1\right)  \left(
\mathbf{1}_{B_{t}^{\prime}}-\mathbf{1}_{B_{t}}\right)  >0$ on~$B_{t}$.
Moreover, we must have $P(B_{t}^{\prime}\cap\left\{  t<\tau\leq T\right\}
)=0$ since $c(t)\left(  \mathbf{1}_{B_{t}^{\prime}}-\mathbf{1}_{B_{t}}\right)
>0$ on~$B_{t}^{\prime}$. By assumption~(\ref{d7465edgd}), since $B_{t}%
,B_{t}^{\prime}\in\mathcal{F}_{t}\subset\mathcal{F}_{T}$, we obtain
$P(B_{t})=P(B_{t}^{\prime})=0$, which proves the claim.

Finally, we construct a continuous modification~$\hat{c}$ of~$c$, with
$\hat{c}(t)\in(0,1)$ for all $t\in\lbrack0,T)$ and $\hat{c}(T)=1$, adapted to
the filtration $(\mathcal{F}_{t})_{t\in\lbrack0,T]}$ and such that
$D(t,T)=\hat{c}(t)\mathbf{1}_{\{\tau<t\}}$ for all $t\in\lbrack0,T]$. Let%
\[
\mathbb{Q}_{T}:=\{Tq:q\in\mathbb{Q}\cap\lbrack0,1]\},
\]
and let%
\[
C:=\left\{  c\text{ is uniformly continuous on }\mathbb{Q}_{T}\right\}  .
\]
Because $\mathbb{Q}_{T}$ is countable, it follows that $C\in\mathcal{F}_{T}$.
Moreover, on $\left\{  T<\tau\right\}  $, we have $D(t,T)=c(t)$ for each
$t\in\mathbb{Q}_{T}$, and $D(t,T)$ is continuous, hence uniformly continuous
as a function of $t\in\lbrack0,T]$, so $c$ is uniformly continuous
on~$\mathbb{Q}_{T}$. This shows that $\left\{  T<\tau\right\}  \subset C$. As
a result, $P\left(  \left(  \Omega\setminus C\right)  \cap\left\{
T<\tau\right\}  \right)  =0$, so $P(\Omega\setminus C)=0$ by
assumption~(\ref{d7465edgd}). This means that $c$ is uniformly continuous
on~$\mathbb{Q}_{T}$ (almost surely under~$P$). Because $\mathbb{Q}_{T}$ is
dense in $[0,T]$, there exists a continuous extension~$\hat{c}$ of~$c$ onto
$[0,T]$. For each $t\in\lbrack0,T]$ we have $t\in\mathbb{Q}_{T}$ or there is a
sequence $t_{n}\in\mathbb{Q}_{T}$ such that $t_{n}<t$ and $t_{n}\rightarrow t$
as $n\rightarrow\infty$, so that $c(t_{n})=\hat{c}(t_{n})\rightarrow\hat
{c}(t)$ as $n\rightarrow\infty$, hence $\hat{c}(t)$ is $\mathcal{F}_{t}%
$-measurable. This shows that $\hat{c}$ is an $\left(  \mathcal{F}_{t}\right)
_{t\in\lbrack0,T]}$-adapted process with continuous paths. We have
$D(t,T)=c(t)=\hat{c}(t)$ for each $t\in\mathbb{Q}_{T}\cap\lbrack0,\tau)$.
Because $D(t,T)$ is continuous on $[0,T]\cap\lbrack0,\tau)$ and $\hat{c}(t)$
is continuous, it follows that $D(t,T)=\hat{c}(t)$ for each $t\in
\lbrack0,T]\cap\lbrack0,\tau)$. We have shown that $D(t,T)=0$ on $\left\{
\tau\leq t\right\}  $, so this means that $D(t,T)=\hat{c}(t)\mathbf{1}%
_{\left\{  t<\tau\right\}  }$ for each $t\in\lbrack0,T]$. Moreover, $\hat
{c}(t)\in(0,1)$ for all $t\in\lbrack0,T)$ and $\hat{c}(T)=1$ since $c$ has the
same properties. This concludes the proof.
\end{proof}

\end{document}